\pgfplotsset{compat=1.11}
\newcounter{AnnahmenCounterA}
\newcounter{AnnahmenCounterB}
\newtheorem*{theorem*}{Argmax Theorem}
\newtheorem{corollary}{Corollary}[section]
\newtheorem{theorem}{Theorem}
\newtheorem{lemma}{Lemma}[section]
\renewenvironment{proof}{{\bfseries Proof.}}
\theoremstyle{definition}
\newcommand{\dt}{\,\mathrm{d}}
\newcommand{\vth}{\vartheta}
\newcommand{\fth}{\boldsymbol{\theta}}
\newcommand{\circled}[1]{\tikz[baseline=(char.base)]{
		\node[shape=circle,draw,inner sep=1pt](char){#1};}}
\begin{document}

\title{A Kolmogorov-Smirnov-Type Test for Dependently Double-Truncated Data}

\author{Anne-Marie Toparkus \& Rafael Wei{\ss}bach \\[2mm] \textit{\footnotesize{Chair of Statistics and Econometrics,}}   \\[-2mm]
        \textit{\footnotesize{Faculty for Economic and Social Sciences,}} \\[-2mm]
        \textit{\footnotesize{University of Rostock}} \\
        }
\date{ }
\maketitle

\renewcommand{\baselinestretch}{1.5}\normalsize

\begin{abstract}
With double-truncated lifespans, we test the hypothesis of a parametric distribution family for the lifespan. The typical finding from demography is an instationary behaviour of the life expectancy, and a copula models the resulting weak dependence of lifespan and the age at truncation. Our main example is the Farlie-Gumbel-Morgenststern copula.  The test is based on Donsker-class arguments and the functional delta method for empirical processes. The assumptions also allow parametric inference, and proofs slightly simplify due to the compact support of the observations. An algorithm with finitely many operations is given for the computation of the test statistic. Simulations becomes necessary for computing the critical value. With the exponential distribution as an example, and for the application to 55{,}000 German double-truncated enterprise lifespans, the constructed Kolmogorov-Smirnov test rejects clearly an age-homogeneous closure hazard. 
 \\[2mm]
\noindent \textit{Keywords:} goodness-of-fit, Kolmogorov-Smirnov, double truncation, dependent truncation, copula
\end{abstract}

\section{Introduction}

Truncation can be part of the sampling design for panel data and especially for event histories. Nonparametric contributions for independent left and also double truncation include \cite{And0,kalblawl1989,efron1999,shen2010,doerr2019} or \cite{franchae2019}. (Semi-)Parametric models are estimated with the likelihood, a conditional likelihood or a profile likelihood in \cite{kalblawl1989,moreira2010a,emura2017,doerr2019,doerre2020,weiswied2021} or \cite{weissbachm2021effect}.

More specifically, let the population be defined as all units of a kind with the birth event in a period of length $G$. In interval sampling, we observe units affected by a death event in a study period of length $s$, which for simplicity we assume to directly follow the birth period \cite[see][Figure 1]{weiswied2021}. In order to exclude censoring as additional design defect, the data include the birthdate for each observed unit and it is usual to code it backwards from the start of the study period and denote this ``age when the study starts'' as $T$ \cite[see][Figure 1]{topweis2025}. The independence of the lifetime $X$ and the birthdate equivalent $T$ contradicts the scientific consensus, at least for human mortality, of an increase in life expectancy \cite[see e.g.][]{oepvau2002}. Dependent single and double truncation have also recently been studied \citep{chiou2019,emura2020,morei2021,renxi2022,topweis2025}.

Our contribution is a nonparametric test of whether the lifetime $X$ may be assumed to have some parametric distribution $f^X_{\theta}$, with parameter $\theta$. A worked-out example will concern the exponential distribution. 
We model the dependency with the one-dimensional parameter $\vartheta$ in the Farlie-Gumbel-Morgenstern copula, so that, conditional on the birthdate, the life expectancy may be trended in both directions, and is free of a trend for $\vartheta=0$. 

Neighbouring works for the special case without dependence include \cite{morei2014}, who construct a test for the distribution of $T$. (For simplicity, we assume in this paper that births stem from a homogeneous Poisson process so that no additional parameter is needed.) As well without dependence, \cite{deunaalv2025} constructs a test for the extended questions of a regression function. For dependent data, but for the special case of left truncation, \cite{emura2012} propose a computationally efficient test.  

Our economic application aims at testing whether, under a presumable trend in business demography, enterprise age influences the risk of an enterprise closing operations (due to insolvency or any other reason). As the population, we consider German enterprises founded in the first quarter century after the German reunification. As data we use 55{,}279 enterprise lifespans, double-truncated as a result of their reported closures in the years 2014 to 2016. The dataset of German enterprise lifespans, previously analyzed parametrically e.g. in \cite{weiswied2021} and \cite{topweis2025}, is re-examined here. Whereas \cite{weiswied2021} focused on estimating the expected lifetime under independent truncation, \cite{topweis2025} expanded the approach to show that the conditional expected lifespan decreases. 

In Section \ref{popmodass}, we define the bivariate distribution model of lifespan and truncation age in the population, formalize the sampling design, derive the observation probability and finally describe the test statistic. Section \ref{seccomsta} reduces the computation of the statistic to finitely many operations. Section \ref{testdist} studies the asymptotic distribution of the test statistic under both truncation independence and dependence, including the derivation of an algorithm for the critical value. 
 Section \ref{secparmod} applies the results to a German dataset.

\section{Population model, sampling design and test statistic} \label{popmodass}
The population of interest is defined as the units originated (``born'') within a time window of length $G$, where $0 < G < \infty$. For each unit are measured, (i) its lifespan $X \in \mathbb{R}^+_0$, with conceivable density $f_{\theta}^X$ and cumulative distribution $F^X_{\theta}$, and (ii) its birth date, counted backwards from the end of the population time window, $T \in [0,G]$, with density $f^T$ and cumulative distribution $F^T$. The domain of outcomes is $S:=\mathbb{R}^+_0\times[0,G]$, which we equip with the Borel sigma-field $\mathcal{A}$. We assume dependence between $X$ and $T$ and describe it with the copula $C_{\vth}$, so that the joint distribution function is, according to Sklar's theorem \cite[][Theorem 2.3.3]{nels2006}, $F_{\fth}(x,t):=C_{\vth}(F^X_{\theta}(x),F^T(t))$, with density $f_{\fth}$ and probability measure $\mathbb{P}_{\fth}$. The distribution depends on the parameter $\fth:=(\theta, \vth)^\textup{T} \in \Theta$, which we simplify here to dimension two. (The arguments in this paper can be extended to the distribution of $T$ so as to have a parameter as well.) We assume the existence of a latent sample, a simple random sample consisting of $n\in \mathbb{N}$ random variables $\Omega \to \mathcal{A}$ with $\mathbb{P}_{\fth_0}$, i.e. $(X_i,T_i)^\text{T}, \, i=1, \ldots, n$, with the unknown true parameter $\fth_0 \in \Theta$.

Let the study start immediately after the birth period and include a sample unit if the unit has not died before the study (left truncation). Let the study continue for say $s>0$ years, and also exclude any unit that dies afterwards (right truncation) \cite[for more motivation see][]{weiswied2021}. Note that $T$ now has the interpretation as age-at-study-start. Formally, $(X_i,T_i)^\text{T}$ is observed if it is in the parallelogram 
\begin{equation*}
	D:=\{(x,t)^\text{T}|0<t\leq x\leq t+s,t\leq G\}\subset S.
\end{equation*}
An observation may be denoted by $(X_j^{\textup{obs}},T_j^{\textup{obs}})^\text{T}$ with ${j=1,\ldots,M_n\leq n}$, but note that these are not measurable mappings $\Omega \to \mathcal{A}$. Furthermore, note also that $M_n=\sum_{i=1}^n \mathds{1}_{\{(X_i,T_i)^\text{T} \in D\}}$ is random. The probability that the $i$th sample unit is observed can be calculated through
\begin{equation*}
	\alpha_{\fth}:=\mathbb{P}_{\fth}(T_i\leq X_i\leq T_i +s)=\int_0^G\int_t^{t+s} f_{\fth}(x,t) \dt x \dt t.
\end{equation*}
For a left-truncated design, \citet[][Theorem 1]{emura2020} provides a method  for reducing the integral to one dimension. This saves computational time, and \citet[][Equation 2]{topweis2025} derives for the double-truncated design 
\begin{equation*}
	\alpha_{\fth}=\int_0^1 \left[c_u\{F^T((F^X_{\theta})^{-1}(u))\}-c_u\{F^T((F^X_{\theta})^{-1}(u))-F^T(s)\} \right]\dt u,
\end{equation*}
where $c_u(v):=\frac{\partial C_{\vth}(u,v)}{\partial u}$. We must assume that
\begin{enumerate}[label=(A\arabic*)]
	\item \label{A1} $0 < \alpha_{\fth} <1$, for all $\fth \in \Theta$.
\end{enumerate}
We need a Fr\'{e}chet derivative later, and its simple calculation will require the possibility of ``differentiation under the integral sign''. We now formulate assumptions about the model that (i) enable such simple calculation, (ii) can easily be verified, and (iii) do not rule out any model which we have encountered so far. Recall that a partially differentiable function with continuous derivatives is totally differentiable. 
Therefore, we write, in short, ``continuously differentiable''. According to \citet[][Lemma 16.2]{bauer2001}, we need the following assumptions. (As usual, $\dot{g}_{\theta}$ and $\dot{g}_{\vth}$ denote the partial derivatives $\frac{\partial}{\partial \theta} g_{\fth}$ and $\frac{\partial}{\partial \vth} g_{\fth}$, respectively, and  $\dot{g}_{\fth}:=(\dot{g}_{\theta},\dot{g}_{\vth})^{\textup{T}}$ is the gradient, for whatever function $g_{\fth}$.)
\begin{enumerate}[label=(A\arabic*)]
	\setcounter{enumi}{1} 
	\item \label{A2} The function $(x,t) \mapsto f_{\fth}(x,t)$ is integrable for each $\fth \in \Theta$; the map \linebreak $\fth \mapsto f_{\fth}(x,t)$ is continuously differentiable on $\Theta$ for each $(x,t) \in S$ with the derivatives $\dot{f}_{\fth}:=(\dot{f}_{\theta},\dot{f}_{\vth})^\textup{T}$; there is an integrable function $h_{\fth}\geq 0$ on $S$ such that $\Vert \dot{f}_{\fth}(x,t)\Vert \leq h_{\fth}$ for all $(x,t) \in S$ and $\fth \in \Theta$.
\end{enumerate}
Let $\hat{\fth}_n$ be a reasonable estimator for $\fth_0$. It is not required that $\hat{\fth}_n$ be a maximum likelihood estimator; however consistency and asymptotic linearity are essential.
\begin{enumerate}[label=(A\arabic*)]
	\setcounter{enumi}{2} 
	\item \label{A3} It is $\hat{\fth}_n \overset{P} \longrightarrow \fth_0$ as $n \to \infty$.
	\item \label{A4} With some $\phi_{\fth_0}=(\phi_{\fth_0,1},\phi_{\fth_0,2})^{\textup{T}}$, the estimating sequence is asymptotically linear:  $\sqrt{n}(\hat{\fth}_n-\fth_0)=\frac{1}{\sqrt{n}} \sum_{i=1}^n \phi_{\fth_0}(X_i,T_i)+o_{\mathbb{P}_{\fth_0}}(1)$
	\item \label{A5} The map $\fth \mapsto \phi_{\fth}(x,t)$ is continuously differentiable and fulfills \newline $\sup_{(x,t) \in S} \Vert \phi_{\fth_0}(x,t)\Vert < \infty$, $\mathbb{E}_{\fth_0}[\phi_{\fth_0}(X_i,T_i)\mathds{1}_{ D}(X_i,T_i)]
	=\mathbf{0}$, \newline $\mathbb{E}_{\fth_0}[\phi_{\fth_0}(X_i,T_i)]=\mathbf{0}$ and $\mathbb{E}_{\fth_0}[\Vert\phi_{\fth_0}(X_i,T_i)\Vert^2]<\infty$.
\end{enumerate}

Approaches to derive the function $\phi_{\fth}$ for independent and dependent double truncation can be found in \cite{weiswied2021} and \cite{topweis2025}, respectively.

Our aim is to assess the goodness of fit of the underlying distributional assumptions.
\begin{equation}\label{hypoth}
	\begin{aligned}
		&H_0: \, \text{The distribution of } (X_i,T_i)^{\text{T}} \text{ derives from the parametric family}\\
		& \, \qquad \{F_{\fth}=C_{\vth}(F_{\theta}^X,F^T) \, |\,  \fth \in \Theta \}.\\ 
		&H_1: \, \text{The distribution of } (X_i,T_i)^{\text{T}} \text{ does not derive from the parametric}\\
		& \, \qquad\text{family }\{F_{\fth}=C_{\vth}(F_{\theta}^X,F^T)\,|\, \fth \in \Theta \}.
	\end{aligned}
\end{equation}

The  Kolmogorov-Smirnov (KS) test statistic, now in two dimensions, compares the empirical with a theoretical distribution function. The theoretical shall be $F_{\fth}$, but the comparison is only possible conditional on observation, i.e. with outcome in $D$. The distribution of the observation $(X_j^{\textup{obs}}, T_j^{\textup{obs}})^\text{T}$ is

\begin{equation} \label{distobs}
	F_{\fth}^{\textup{obs}}(x,t):=\mathbb{P}_{\fth}((X_i,T_i)^\text{T}\in [0,x]\times[0,t] \cap D)/\alpha_{\fth}.
\end{equation}
We adapt the \textit{empirical distribution} as the discrete random measure given by $\mathbb{P}_n(A):=n^{-1}\sum_{i=1}^n \epsilon_{\binom{X_i}{T_i}}(A)$ for $A:=[0,x] \times [0,t]$ to truncation on $D$, that is $\mathbb{P}_{n,D}(A):=n^{-1}\sum_{i=1}^n \epsilon_{\binom{X_i}{T_i}}(A \cap D)$, and compare it with its theoretical analogue 
$\alpha_{\fth_0} F_{\fth_0}^{\textup{obs}}(x,t)$. The Kolmogorov-Smirnov (KS) test statistic for known $n$ and a known true parameter $\fth_0$ is 
\begin{equation}\label{dreieck}
	\sup_{(x,t)^{\textup{T}} \in S} \sqrt{n} \, \big\vert \mathbb{P}_{n,D}([0,x]\times[0,t])- \alpha_{\fth_0}F_{\fth_0}^{\textup{obs}}(x,t) \big\vert .
\end{equation}
We also wish to take into account that neither $\fth_0$ nor $n$ are observed in practical applications. In a parametric analysis, $n$ can be profiled out and does not need to be estimated \cite[see][]{weiswied2021}. However, here estimation becomes mandatory and we use $\hat{n}:=M_n/\alpha_{\hat{\fth}_n}$, the maximum likelihood estimator. Inserting the estimators leads to the composite KS statistic:
\begin{equation}\label{e10}
	\frac{\sqrt{M_n}}{\sqrt{\alpha_{\hat{\fth}_n}}}	\sup_{(x,t)^{\textup{T}} \in S} \left\vert \mathbb{P}_{\hat{n},D}\big([0,x]\times[0,t]\big)- \alpha_{\hat{\fth}_n} F_{\hat{\fth}_n}^{\textup{obs}}(x,t)\right\vert
\end{equation}
with 
$\mathbb{P}_{\hat{n},D}:= 	\frac{\alpha_{\hat{\fth}_n}}{M_n} \sum_{i=1}^n \epsilon_{\binom{X_i}{T_i}}(A \cap D)$.

\section{Computation of the test statistic} \label{seccomsta}

Recall \eqref{distobs}, using the observations, the test statistic \eqref{e10} can then be rewritten  as
\begin{multline*}
\frac{\sqrt{M_n}}{\sqrt{\alpha_{\hat{\boldsymbol{\theta}}_n}}}	\sup_{(x,t)^{\textup{T}} \in S} \left\vert \mathbb{P}_{\hat{n},D}\big([0,x]\times[0,t]\big)-\mathbb{P}_{\hat{\theta}_n}\big((X_i,T_i)^\textup{T} \in [0,x]\times[0,t]\cap	D\big)\right\vert \\
	= \sqrt{\alpha_{\hat{\boldsymbol{\theta}}_n}}\sqrt{M_n} \sup_{(x,t)^{\textup{T}} \in D} \left\vert \frac{1}{M_n} \sum_{j=1}^{M_n} 		\epsilon_{\big(\genfrac{}{}{0pt}{}{X_j^{\textup{obs}}}{T_j^{\textup{obs}}}\big)}([0,x]\times[0,t]) - F^{\textup{obs}}_{\hat{\boldsymbol{\theta}}_n}(x,t) \right\vert .
\end{multline*}

  In the event case of a one-dimensional situation, due to monotonicity of the CDF, the univariate Kolmogorov-Smirnov test statistic is obtained by computing the distance at the discontinuity points of the empirical distribution. This includes the right-side limits at these points \cite[see e.g.][]{holl2014}. The monotonicity arguments also hold true in the bivariate case, with the two dimensions $X_j^{\textup{obs}}$ and $T_j^{\textup{obs}}$. However, the set of discontinuities is not finite, but, again by virtue of monotonicity arguments, the calculation of the KS statistic requires only a finite number of points when computing \eqref{e10}. \cite{justel1997} present details for two independent uniform distributions, with the unit square as support. We adjust their arguments to the parallelogram $D$ as support.   

Let $(x_1^{\textup{obs}},t_1^{\textup{obs}})^{\textup{T}}, \ldots, (x_{m_n}^{\textup{obs}},t_{m_n}^{\textup{obs}})^{\textup{T}}$ be the realization of a truncated sample and let $F^{\textup{obs}}_{m_n}$ denote the realized empirical distribution function. Define left-side and right-side differences 
\begin{eqnarray*}
D_{m_n}^+(x,t) & :=& F^{\textup{obs}}_{m_n} (x, t)-F^{\textup{obs}}_{\hat{\boldsymbol{\theta}}_n}(x,t) \quad \text{and} \\
 D_{m_n}^-(x,t) & :=& F^{\textup{obs}}_{\hat{\boldsymbol{\theta}}_n}(x,t)- F^{\textup{obs}}_{m_n}(x, t).
 \end{eqnarray*}

 Then, the KS statistic \eqref{e10} needs the maximum of $\delta_{m_n}^+:= \sup_{(x,t)^{\textup{T}} \in D} D_{m_n}^+(x,t)$ and $\delta_{m_n}^-:= \sup_{(x,t)^{\textup{T}} \in D} D_{m_n}^-(x, t)$. Similar to \cite{justel1997}, we can obtain the left-side difference $\delta_{m_n}^{+}$ as the maximum of $D_{m_n}^+(x,t)$ evaluated at $(0,0)$, the observation points (see $\bullet$ in Figure \ref{abb1}) and the intersections originating from discordant points,  
 $$I:=\{(x_j^{\textup{obs}}, t_{j'}^{\textup{obs}}) \vert \,x_{j'}^{\textup{obs}} < x_j^{\textup{obs}}, \; t_{j'}^{\textup{obs}} > t_j^{\textup{obs}}; \, j, j'=1, \ldots, m_n\}$$ 
 (see $\times$ in Figure \ref{abb1}). For $\delta_{m_n}^-$ we need to evaluate the difference  $F^{\textup{obs}}_{\hat{\boldsymbol{\theta}}_n}(x,t)- F^{\textup{obs}}_{m_n}(x^-, t^-)$ with the definition  $F^{\textup{obs}}_{m_n}(x^-, t^-):= \lim_{\delta \to 0}F^{X^{\textup{obs}}, T^{\textup{obs}}}_{m_n}(x- \delta, t-\delta)$, at $(G+s,G)$, the intersection points and the projection points of the observed points onto the upper and right edge of $D$, 
 $$P:=\{(x_j^{\textup{obs}}, \min(x_j^{\textup{obs}}, G)) \vert j=1, \ldots, m_n\} \cup \{(t_j^{\textup{obs}}+s, t^{\textup{obs}}_j) \vert j=1, \ldots, m_n\}$$ 
 (see $\square$ in Figure \ref{abb1}), and the three corners $(G,0), (s,G), (G+s,G)$ (see $\triangle$ in Figure \ref{abb1}).

	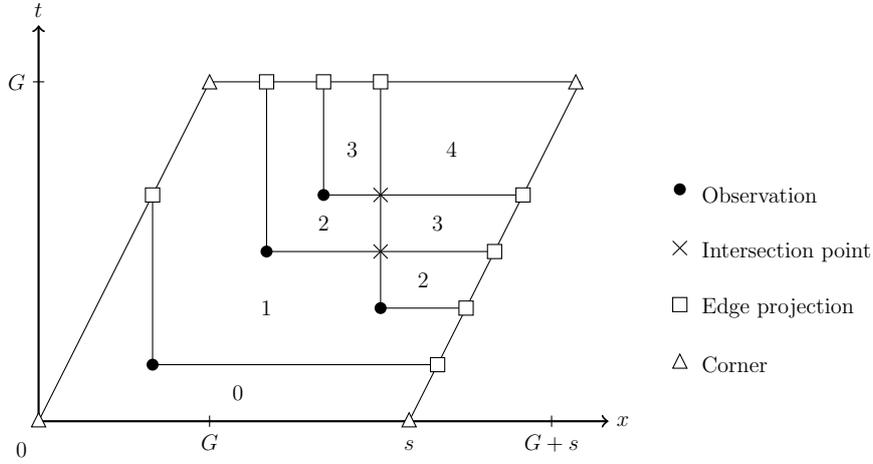
\begin{figure}[H]
	\centering
	\begin{tikzpicture}[scale=0.75, every node/.style={scale=0.7}] 
		\draw[->, thick] (0,0) -- (0,7) node[above] {$t$};
		\draw[->, thick] (0,0) -- (10,0) node[right] {$x$};
		\node at (-0.3,-0.5) {0};
		\draw (-0.1,6) -- (0.1,6) node[left] at (-0.1,6) {$G$};
		\draw (3,-0.1) -- (3,0.1) node[below] at (3,-0.1) {$G$};
		\draw (6.5,-0.1) -- (6.5,0.1) node[below] at (6.5,-0.2) {$s$};
		\draw (9,-0.1) -- (9,0.1) node[below] at (9,-0.1) {$G+s$};

		\foreach \x/\y in {2/1,4/3,6/2,5/4} {
			\fill (\x,\y) circle (3pt);
		}
		
		\foreach \x/\y in {6/3,6/4} {
			\draw (\x-0.125,\y-0.125) -- (\x+0.125,\y+0.125);
			\draw (\x-0.125,\y+0.125) -- (\x+0.125,\y-0.125);
		}
		
		\draw (2,1) -- (2,3.875);
		\draw (4,3) -- (4,5.875);
		\draw (5,4) -- (5,5.875);
		\draw (6,2) -- (6,5.875);
		
		\draw (2,1) -- (6.875,1);
		\draw (6,2) -- (7.375,2);
		\draw (4,3) -- (7.875,3);
		\draw (5,4) -- (8.375,4);
		\draw (3,6) -- (9.5,6);
		
		\draw (0,0) -- (3,6);
		
		\draw (6.5,0) -- (9.5,6);
		
		\foreach \x/\y in {1.875/3.875, 3.875/5.875, 4.875/5.875, 5.875/5.875,
			8.375/3.875, 7.875/2.875, 7.375/1.875, 6.875/0.875} {
			\draw[fill=white] (\x,\y) rectangle ++(0.25,0.25);
		}
		
		\foreach \x/\y in {9.3/5.875,6.375/-0.1,2.875/5.875,-0.125/-0.1} {
			\draw[fill=white] (\x,\y) -- ++(0.25,0)
			-- ++(-0.125,0.25)
			-- ++(-0.125,-0.25)
			-- cycle;
		}
		
		\node at (3.5,0.5) {0};
		\node at (4,2) {1};
		\node at (6.75,2.5) {2};
		\node at (5,3.5) {2};
		\node at (7,3.5) {3};
		\node at (5.5,4.8) {3};
		\node at (7.25,4.8) {4};
		
		\fill (11.25,4.1) circle (3pt);
		\node[anchor=west] at (11.5,4) {Observation};
		
		\draw (11.125,2.937) -- (11.375,3.187);
		\draw (11.125,3.187) -- (11.375,2.937);
		\node[anchor=west] at (11.5,3) {Intersection point};
		
		\draw (11.125,1.937) rectangle ++(0.25,0.25);
		\node[anchor=west] at (11.5,2) {Edge projection};
		
		\draw (11.125,0.937) -- ++(0.25,0) -- ++(-0.125,0.25) -- ++(-0.125,-0.25) -- cycle;
		\node[anchor=west] at (11.5,1) {Corner};
		
	\end{tikzpicture}
	\caption{Two-dimensional empirical distribution function $F^{\textup{obs}}_{m_n}$ for $m_n=4$ (multiplied by 4) with support $D$}
\label{abb1}
\end{figure}

Depending on the sample, $3m_n$ to $3m_n+\binom{m_n}{2}$ evaluations are required. Fortunately, the evaluations for $D_{m_n}^-$ can be traced back to  $D_{m_n}^+$. For the set of intersection points $I$, it is for instance: 
\begin{multline*}
	\max_{(x,t)^{\textup{T}} \in I}(F^{X^{\textup{obs}}, T^{\textup{obs}}}_{\hat{\boldsymbol{\theta}}_n}(x,t)-F^{X^{\textup{obs}},  T^{\textup{obs}}}_{m_n}(x^-, t^-)) \\ =-\min_{(x,t)^{\textup{T}} \in I}(F^{X^{\textup{obs}}, T^{\textup{obs}}}_{m_n}(x^-, t^-)-F^{X^{\textup{obs}}, T^{\textup{obs}}}_{\hat{\boldsymbol{\theta}}_n}(x, t))\\
	=-\min_{(x,t)^{\textup{T}} \in I}\left[\frac{1}{m_n}\sum_{j=1}^{m_n}\epsilon_{\binom{x_j^{\textup{obs}}}{t_j^{\textup{obs}}}}\big([0,x)\times[0,t) \big) -F^{X^{\textup{obs}}, T^{\textup{obs}}}_{\hat{\boldsymbol{\theta}}_n}(x, t)\right]\\
	=\frac{2}{m_n}-\min_{(x,t)^{\textup{T}} \in I}\left[\frac{1}{m_n}\sum_{j=1}^{m_n}\epsilon_{\binom{x_j^{\textup{obs}}}{ t_j^{\textup{obs}}}}\big([0,x]\times[0,t] \big) -F^{X^{\textup{obs}}, T^{\textup{obs}}}_{\hat{\boldsymbol{\theta}}_n}(x, t)\right]\\
	=\frac{2}{m_n}-\min_{(x,t)^{\textup{T}} \in I}\left[F^{X^{\textup{obs}}, T^{\textup{obs}}}_{m_n}(x, t)-F^{X^{\textup{obs}}, T^{\textup{obs}}}_{\hat{\boldsymbol{\theta}}_n}(x,t)\right].
\end{multline*}

Therefore, the test statistic can be obtained from the following calculations:

\rule{\linewidth}{0.5pt}
\hypertarget{h4}{\textbf{Algorithm 1:}} Computation of KS test statistic \eqref{e10} \vspace{-3mm}\\
\rule{\linewidth}{0.5pt}
\begin{enumerate}
	\setlength{\itemsep}{-6pt}
	\item Compute $\delta_1:=\max_{j=1, \ldots, m_n} D_{m_n}^+(x_j^{\textup{obs}}, t_j^{\textup{obs}})$ (``left-maximum distance in the observations'') 
	\item Determine set $I$ (``intersection points''). 
	\item Compute $\delta_2:=\max_{(x,t)^{\textup{T}} \in I}\{D_{m_n}^+(x,t)\}$. (``left-maximum distance in intersection points'') 
	\item Compute $\delta_3:=2/m_n-\min_{(x,t)^{\textup{T}} \in I}\{D_{m_n}^+(x, t)\}$. (``right-maximum distance in intersection points'') 
	\item  Compute $\delta_4:=1/m_n-\min_{(x,t)^{\textup{T}} \in P} D_{m_n}^+(x, t)$. (``right-maximum distance in projections of observations onto  right and upper edge'')
	\item Compute $\delta_5:=\max(D_{m_n}^+(0,0), - D_{m_n}^+(G,0), - D_{m_n}^+(s,G), - D_{m_n}^+(G+s,G)$. (``left- or right-maximum distance in corners'')
	\item Compute $\max\{\delta_1, \delta_2, \delta_3, \delta_4, \delta_5\}$. 
	\item Multiply by $\sqrt{\alpha_{\hat{\boldsymbol{\theta}}_n}}\sqrt{m_n}$.\vspace{-5mm}
\end{enumerate}
\rule{\linewidth}{0.5pt}\\


\section{Distribution of the test statistic under $\mathbf{H_0}$} \label{testdist}

After the computation of our test statistic, we now derive the critical value, i.e. the quantile of \eqref{e10} under $H_0$ of \eqref{hypoth}. Section \ref{3eins} treats latent sample size and parameter as known. Section \ref{3zwei} considers $n$ to be known but $\boldsymbol{\theta}_0$ to be estimated and Section \ref{3drei} finally treats both simultaneously as unknown.  

\subsection{Limit theorems for truncated data} \label{3eins}

Consider set $B \subset S$ as a ``point''; then $\mathbb{P}_{n,D}(B)$ and $\sqrt{n} \mathbb{P}_{n,D}(B)$ converge pointwise by the LLN and the CLT. An important set will be $[0,x]\times[0,t]$. The test statistic \eqref{e10} includes a supremum, so that convergence needs to be uniform, or in other words, functional. (Note already that a corresponding supremum will finally not be one of points in a set, neither one of sets within some algebra, but a supremum of functions $\mathds{1}_{[0, x]\times [0,t] \cap D}$.) We switch to the functional-valued notation (see \citet[][Sect. 19.2]{vaart1998} or \citet[][Sect. 2.1]{vawell1996}).

 The empirical distribution $\mathbb{P}_n$ has been defined in Section \ref{popmodass}, and for a given class $\mathcal{G}$ of Borel-measurable functions $g: S \to \mathbb{R}$, denote $\mathbb{P}_n (g)$ the expectation of $g$ under $\mathbb{P}_n$, i.e.
\begin{equation*}
	g \mapsto \mathbb{P}_n (g):=\frac{1}{n} \sum_{i=1}^n g(X_i,T_i).
\end{equation*}
Furthermore, define $\mathbb{E}_{\fth}(g):=\mathbb{E}_{\fth}(g(X_i,T_i))=\int_{\Omega} g(X_i,T_i) \dt \mathbb{P}_{\fth}$. The centered and scaled version of the mapping $g \mapsto \mathbb{P}_n (g)$ is the (functional-valued) \textit{empirical process} evaluated in $g$, that is:
\begin{equation*}
	g \mapsto \mathbb{G}_n( g):=\sqrt{n}(\mathbb{P}_n-\mathbb{E}_{\fth_0})(g)=\frac{1}{\sqrt{n}} \sum_{i=1}^n (g(X_i,T_i)-\mathbb{E}_{\fth_0}(g)).
\end{equation*}
(We will also use $g \mapsto \mathbb{G}_n( g)$ for two-dimensional functions $g=(g_1,g_2)^\textup{T}$, for which $\mathbb{P}_n(g)=(\mathbb{P}_n(g_1),\mathbb{P}_n(g_2))^\textup{T}$ will be calculated component-wise, as is common for expectations.) Now, recall the following two definitions for a functional-valued definition of the Brownian bridge  \cite[see e.g.][Sect. 2.1]{vawell1996}.	A class $\mathcal{G}$ of measurable functions $g: S \to \mathbb{R}$ is called \textit{Glivenko-Cantelli} if $n^{-1/2}\Vert \mathbb{G}_n( g)\Vert_{\mathcal{G}}:=n^{-1/2} \sup_{g\in \mathcal{G}} \vert \mathbb{G}_n( g) \vert  \longrightarrow 0$,	where the convergence is outer almost surely as $ n \to \infty$.
Further details about outer integrals can be found, for example, in \citet[][Sect. 1.2]{vawell1996}.

	A class $\mathcal{G}$ of measurable functions $g: S\to \mathbb{R}$ is called \textit{Donsker}, if the sequence of processes $\{\mathbb{G}_n(g): \, g \in \mathcal{G}\}$ converges in distribution to a tight limit process $\mathbb{G}$ in the space $\ell^{\infty}(\mathcal{G})$. Then, the limit process is Gaussian with zero mean and covariance function
	\begin{equation}\label{e14}
		\textup{Cov}[\mathbb{G}(g), \mathbb{G}(h)]=\mathbb{E}_{\fth_0}(gh)-\mathbb{E}_{\fth_0}(g)\mathbb{E}_{\fth_0}(h)
	\end{equation}
	and therefore especially a $\mathbb{P}_{\fth_0}$-Brownian bridge $\mathbb{B}_{\mathbb{P}_{\fth_0}}(g):=\mathbb{B}(\mathbb{E}_{\fth_0}(g))$, with $\mathbb{B}$ as real-valued Brownian bridge on $[0,1]$.

According to \citet[][Ex. 19.6]{vaart1998} or \citet[][Ex. 2.5.4]{vawell1996} the class of functions $\mathcal{G}:=\{g_{x,t}=\mathds{1}_{[0, x]\times [0,t] }, \, (x,t) \in S\}$ is Glivenko-Cantelli and Donsker. In view of our truncated process, we need to multiply each function with $\mathds{1}_D$. A certain kind of Lipschitz condition ensures that the desired properties will extend.
\begin{corollary}\label{cor2}\textbf{}\\
	The class of functions $\mathcal{G}_D=\{g_{x,t,D}=\mathds{1}_{[0, x]\times [0,t] }\mathds{1}_D, \, (x,t) \in S\}$ is Glivenko-Cantelli and Donsker
\end{corollary}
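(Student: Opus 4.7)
The plan is to transfer the Glivenko-Cantelli and Donsker properties from the base class $\mathcal{G} = \{g_{x,t} = \mathds{1}_{[0,x]\times[0,t]} : (x,t) \in S\}$, which is cited just above the corollary as being both, to $\mathcal{G}_D$ by exploiting the fact that every element of $\mathcal{G}_D$ arises from an element of $\mathcal{G}$ through pointwise multiplication with the single fixed, bounded, measurable function $\mathds{1}_D$.

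The first step is to record the Lipschitz-type estimate
\[
\vert g_{x,t,D}(u,v) - g_{x',t',D}(u,v)\vert = \vert g_{x,t}(u,v) - g_{x',t'}(u,v)\vert \, \mathds{1}_D(u,v) \le \vert g_{x,t}(u,v) - g_{x',t'}(u,v)\vert,
\]
which holds pointwise and therefore in $L^2(\mathbb{P}_{\fth_0})$, in $L^1(\mathbb{P}_{\fth_0})$ and uniformly. Consequently, for any $L^2$-bracket $[g^L,g^U]$ of size $\le\varepsilon$ enclosing $g_{x,t}\in\mathcal{G}$, the pair $[g^L\mathds{1}_D,\, g^U\mathds{1}_D]$ is an $L^2$-bracket of size $\le\varepsilon$ enclosing $g_{x,t,D}$. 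Hence the bracketing numbers satisfy $N_{[\,]}(\varepsilon,\mathcal{G}_D,L^2(\mathbb{P}_{\fth_0})) \le N_{[\,]}(\varepsilon,\mathcal{G},L^2(\mathbb{P}_{\fth_0}))$; the common constant envelope $F\equiv 1$ is trivially square-integrable, and measurability of $\mathcal{G}_D$ is inherited from $\mathcal{G}$.

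With these ingredients, the Glivenko-Cantelli property transfers via the bracketing criterion (Van der Vaart, Thm.~19.4; Van der Vaart--Wellner, Thm.~2.4.1) and the Donsker property via the bracketing or uniform-entropy CLT (Van der Vaart, Thm.~19.5; Van der Vaart--Wellner, Thm.~2.5.6). More abstractly, the whole argument is the specialisation of the product-preservation theorem for Donsker classes (Van der Vaart--Wellner, Cor.~2.10.13) to the degenerate case where one of the two factor classes is the singleton $\{\mathds{1}_D\}$.

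The only real obstacle I anticipate is bibliographic rather than mathematical: preservation theorems are normally stated for pairwise products of two non-trivial Donsker classes, so the write-up needs one explicit sentence pointing out that the singleton factor case trivially satisfies the envelope and measurability conditions, thereby reducing the result to the cited preservation statement without invoking heavier machinery than necessary.
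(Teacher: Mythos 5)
Your proposal is correct, and it reaches the conclusion by a genuinely different mechanism than the paper, even though both arguments pivot on the same elementary observation that multiplication by the fixed nonnegative, bounded function $\mathds{1}_D$ is a pointwise contraction. The paper takes that pointwise inequality and feeds it directly into the Lipschitz-transformation preservation theorem of van der Vaart and Wellner (their Condition (2.10.5) together with Theorem 2.10.6), then gets Glivenko--Cantelli for free because every Donsker class is Glivenko--Cantelli. You instead convert the contraction into a comparison of bracketing numbers, $N_{[\,]}(\varepsilon,\mathcal{G}_D,L^2(\mathbb{P}_{\fth_0}))\le N_{[\,]}(\varepsilon,\mathcal{G},L^2(\mathbb{P}_{\fth_0}))$, and invoke the bracketing Glivenko--Cantelli and Donsker theorems; your fallback via the product-preservation corollary with the singleton factor $\{\mathds{1}_D\}$ is essentially the paper's route in different clothing. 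The bracketing comparison itself is sound (since $\mathds{1}_D\ge 0$, ordered brackets remain ordered after multiplication, and their $L^2$ size cannot increase), but it carries the one bibliographic burden you already flagged, plus a second one you did not: the references cited for $\mathcal{G}$ (van der Vaart Example 19.6 / van der Vaart--Wellner Example 2.5.4) establish the Donsker property of orthant indicators via VC/uniform-entropy arguments, so to run the bracketing CLT you would additionally need to record that the bracketing entropy of bivariate cells is polynomially bounded (standard, via marginal-quantile grids, but not literally contained in the citations you lean on). The paper's route avoids this entirely because Theorem 2.10.6 only requires the base class to be Donsker, by whatever means. What your route buys in exchange is self-containedness at the level of first principles and an explicit quantitative entropy bound for $\mathcal{G}_D$ that could be reused elsewhere.
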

\begin{proof}
	As stated above, the class $\mathcal{G}$ is Donsker with $\Vert \mathbb{E}_{\fth_0}\Vert_{\mathcal{G}}\leq 1<\infty$. Furthermore, the function $(x,t) \mapsto \mathds{1}_D(x,t)$ is measurable and uniformly bounded. Although the product of functions from $\mathcal{G}$ and $ \mathds{1}_D$ is not a Lipschitz function, we have
	\begin{align*}
		\vert g_{x_1,t_1,D}(x,t) \mathds{1}_D(x,t) &- g_{x_2,t_2,D}(x,t) \mathds{1}_D(x,t)\vert\\
		&\leq \Vert \mathds{1}_D \Vert_{\infty}\, \vert g_{x_1,t_1,D}(x,t) - g_{x_2,t_2,D}(x,t) \vert
	\end{align*}
	for all $(x,t), (x_1,t_1), (x_2,t_2) \in S$. Hence, \citet[][Condition (2.10.5)]{vawell1996} is satisfied and Theorem 2.10.6 applies therein. Hence, the class $\mathcal{G}_D$ is Donsker. It is also Glivenko-Cantelli, given that every Donsker class is a Glivenko-Cantelli class \cite[][Section 2.1]{vawell1996}. \qed
\end{proof}
In terms of the truncated empirical distribution, this means
\begin{equation*}
	\sup_{(x,t) \in S} \big\vert  \mathbb{P}_{n,D}\big([0,x]\times[0,t]\big)-\mathbb{P}_{\boldsymbol{\theta}_0}\big((X_i,T_i)^\textup{T} \in [0,x]\times[0,t]\cap	D\big) \big\vert\overset{a.s.} \longrightarrow 0
\end{equation*}
and the sequence
\begin{equation*}
	\big\{\sqrt{n}\big(\mathbb{P}_{n,D}\big([0,x]\times[0,t]\big)-\mathbb{P}_{\boldsymbol{\theta}_0}\big((X_i,T_i)^\textup{T} \in [0,x]\times[0,t]\cap	D\big)\big): (x,t) \in S\big\}
\end{equation*}
converges in distribution to a $\mathbb{P}_{\theta_0}$-Brownian bridge $\mathbb{B}_{\mathbb{P}_{\theta_0}}(g_{x,t,D})$ (in $\ell^{\infty}(\mathcal{G_D})$). 

The distribution of the test statistics as supremum can then be simulated by simulating Brownian bridges with the covariance function \eqref{e14} as high-dimensional multivariate Gaussian vectors. The necessary discretisation of the function space $\mathcal{G_D}$ of indicator functions is simple to realise, by choosing a  grid of the support $D$. Then expressions $\mathbb{E}_{\fth_0}(g_{x_i,t_i,D})$ ($i=1,2$ and not to be confused with $i$ as index of the unit in Section \ref{popmodass}) and $\mathbb{E}_{\fth_0}(g_{x_1,t_1,D}g_{x_2,t_2,D})$ must be calculated. 
These will not be given for the preliminary situation here, but the strategy will not change when we account for the unknown $\fth_0$ and $n$, and Appendix \ref{App4} will list the arising expectations. An algorithm summing up the approximating steps will be provided in Section \ref{sec4_4}.


\subsection{Estimation of $\fth_0$ where $\mathbf{n}$ is known (and other preliminaries)}  \label{3zwei}

Write the test statistic \eqref{e10} now in a functional-valued notation. We set 
\begin{equation}\label{e1}
	\mathbb{T}(g_{x,t,D}):=\frac{\sqrt{\alpha_{\hat{\fth}_n}}}{\sqrt{M_n}}n\mathbb{P}_n(g_{x,t,D})-\frac{\sqrt{M_n}}{\sqrt{\alpha_{\hat{\fth}_n}}}\mathbb{E}_{\hat{\fth}_n}(g_{x,t,D}),
\end{equation}
which is to be maximized in absolute terms, i.e. \eqref{e10} is equal to
\begin{equation}\label{e11}
	\sup_{g_{x,t,D} \in \mathcal{G}_D} \left\vert 	\mathbb{T}(g_{x,t,D}) \right\vert .
\end{equation}
Note that $n\mathbb{P}_n(g_{x,t,D})$ does not depend on $n$. In a first step, we determine the asymptotic behaviour of $\mathbb{T}(g_{x,t,D})$ by decomposing it into several sub-problems, that are analyzed separately.

When we add to equation \eqref{e1} the terms
\begin{align*}
	\pm \sqrt{n}\left( \mathbb{P}_n(g_{x,t,D})- \mathbb{E}_{\hat{\fth}_n}(g_{x,t,D})\right) \pm \sqrt{\alpha_{\fth_0}} \left(\frac{n}{\sqrt{M_n}} \mathbb{P}_n(g_{x,t,D})-\frac{\sqrt{n}}{\sqrt{\alpha_{\hat{\fth}_n}}}\mathbb{E}_{\hat{\fth}_n}(g_{x,t,D})\right),
\end{align*}
the following representation results (after factoring out $\mathbb{P}_n(g_{x,t,D})$ and $\mathbb{E}_{\hat{\fth}_n}(g_{x,t,D})$):
\begin{align*}
	&\sqrt{n}\left( \mathbb{P}_n(g_{x,t,D})- \mathbb{E}_{\hat{\fth}_n}(g_{x,t,D})\right)\\
	&+\frac{n}{\sqrt{M_n}} \mathbb{P}_n(g_{x,t,D}) \left[\left(\sqrt{\alpha_{\hat{\fth}_n}}-\sqrt{\alpha_{\fth_0}}\right)- \left(\frac{\sqrt{M_n}}{\sqrt{n}}-\sqrt{\alpha_{\fth_0}}\right)\right]\\
	&+ \frac{\mathbb{E}_{\hat{\fth}_n}(g_{x,t,D})}{\sqrt{\alpha_{\hat{\fth}_n}}} \left[\left(\sqrt{n} \sqrt{\alpha_{\hat{\fth}_n}}-\sqrt{n} \sqrt{\alpha_{\fth_0}}\right)- \left(\sqrt{M_n}-\sqrt{n} \sqrt{\alpha_{\fth_0}}\right)\right]
\end{align*}
Another factoring out in the last two summands of the preceding display leads to
\begin{align}\label{e2}
	\begin{split}
		&\underbrace{\sqrt{n}\left( \mathbb{P}_n(g_{x,t,D})- \mathbb{E}_{\hat{\fth}_n}(g_{x,t,D})\right)}_{\text{\circled{1}}}\\
		&+ \left[\underbrace{\sqrt{n}\left(\sqrt{\alpha_{\hat{\fth}_n}}-\sqrt{\alpha_{\fth_0}}\right)}_{\text{\circled{2}}}-\underbrace {\left(\sqrt{M_n}-{\sqrt{n}}\sqrt{\alpha_{\fth_0}}\right)}_{\text{\circled{3}}}\right]\\
		&\cdot \underbrace{\left[\frac{\sqrt{n}}{\sqrt{M_n}} \mathbb{P}_n(g_{x,t,D})+ \frac{\mathbb{E}_{\hat{\fth}_n}(g_{x,t,D})}{\sqrt{\alpha_{\hat{\fth}_n}}}\right]}_{\text{\circled{4}}}.
	\end{split}
\end{align}
The term \circled{1} in equation \eqref{e2} represents the empirical process in which the parameter $\fth_0$ is estimated, and $n$ is considered as known \cite[a situation also considered by][]{shen2014}.
 Its convergence in distribution follows from the Donsker property in Corollary \ref{cor2} and Assumptions \ref{A3}-\ref{A5}. Specifically, a certain smoothness is required for $\fth \mapsto \mathbb{E}_{\fth}$ as a map from $\Theta$ to $\ell^{\infty}(\mathcal{G}_D)$, which follows from Assumption \ref{A2} (see Appendix \ref{App11}).
\begin{corollary}\label{cor1}
	Under \textup{\ref{A2}} and $H_0$, the map $\fth\in  \Theta \mapsto \mathbb{E}_{\fth} \in \ell^{\infty}(\mathcal{G}_D)$ is Fréchet differentiable at $\fth_0$ with the derivative
	\begin{equation*}
		\dot{\mathbb{E}}_{\fth_0}(g_{x,t,D}):=\int_{[0,x]\times[0,t]\cap D} \dot{f}_{\fth_0}(\tilde{x},\tilde{t}) \dt (\tilde{x}, \tilde{t}).
	\end{equation*}
\end{corollary}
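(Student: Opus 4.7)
The plan is to reduce everything to a pointwise bound on $f_{\fth_0+h} - f_{\fth_0} - \dot f_{\fth_0}^{\mathrm{T}} h$ and then to observe that this bound is uniform over $\mathcal{G}_D$ because it already dominates the integral over any sub-rectangle.

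First, I would note that for each $g = g_{x,t,D} \in \mathcal{G}_D$ we have by definition $\mathbb{E}_{\fth}(g) = \int_{[0,x]\times[0,t] \cap D} f_{\fth}(\tilde x, \tilde t) \dt(\tilde x, \tilde t)$, so the remainder at $\fth_0$ is
\begin{equation*}
\mathbb{E}_{\fth_0+h}(g) - \mathbb{E}_{\fth_0}(g) - \dot{\mathbb{E}}_{\fth_0}(g)^{\mathrm{T}} h
= \int_{[0,x]\times[0,t]\cap D} \bigl[\,f_{\fth_0+h} - f_{\fth_0} - \dot f_{\fth_0}^{\mathrm{T}} h\,\bigr] \dt(\tilde x, \tilde t).
\end{equation*}
Using continuous differentiability of $\fth \mapsto f_{\fth}$ from \ref{A2}, the fundamental theorem of calculus on the segment from $\fth_0$ to $\fth_0+h$ gives the pointwise identity $f_{\fth_0+h}(\tilde x,\tilde t) - f_{\fth_0}(\tilde x, \tilde t) = \int_0^1 \dot f_{\fth_0 + sh}(\tilde x,\tilde t)^{\mathrm{T}} h \dt s$, from which subtracting $\dot f_{\fth_0}^{\mathrm{T}} h$ and using Cauchy–Schwarz yields
\begin{equation*}
\bigl| f_{\fth_0+h}(\tilde x,\tilde t) - f_{\fth_0}(\tilde x,\tilde t) - \dot f_{\fth_0}(\tilde x,\tilde t)^{\mathrm{T}} h \bigr|
\le \Vert h\Vert \int_0^1 \bigl\Vert \dot f_{\fth_0+sh}(\tilde x, \tilde t) - \dot f_{\fth_0}(\tilde x, \tilde t)\bigr\Vert \dt s.
\end{equation*}

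The second step is the uniformity: integrating over $S$ (which contains $[0,x]\times[0,t] \cap D$) dominates the integral on the sub-rectangle, and since the resulting bound no longer depends on $(x,t)$, it passes unchanged to the supremum over $g \in \mathcal{G}_D$:
\begin{equation*}
\sup_{g \in \mathcal{G}_D}\bigl| \mathbb{E}_{\fth_0+h}(g) - \mathbb{E}_{\fth_0}(g) - \dot{\mathbb{E}}_{\fth_0}(g)^{\mathrm{T}} h\bigr|
\le \Vert h\Vert \int_0^1 \int_S \bigl\Vert \dot f_{\fth_0+sh} - \dot f_{\fth_0}\bigr\Vert \dt(\tilde x,\tilde t) \dt s.
\end{equation*}
This is the crucial ``free'' uniformity that makes Fréchet differentiability (in the space $\ell^{\infty}(\mathcal{G}_D)$) reduce to an ordinary differentiation-under-the-integral argument.

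Finally, I would argue that the double integral on the right tends to zero as $\Vert h\Vert \to 0$. Pointwise in $(\tilde x, \tilde t)$ and $s$, continuous differentiability of $\fth \mapsto f_{\fth}$ ensures $\dot f_{\fth_0+sh} \to \dot f_{\fth_0}$; the uniform bound $\Vert \dot f_{\fth}\Vert \le h_{\fth_0}$ from \ref{A2}, valid on a neighbourhood of $\fth_0$, provides the integrable dominator $2 h_{\fth_0}$ on $[0,1]\times S$. Dominated convergence on the product space gives the $o(1)$ rate, and dividing by $\Vert h\Vert$ yields the required $o(\Vert h\Vert)$ remainder. The derivative is then identified as the bounded linear map $h \mapsto \dot{\mathbb{E}}_{\fth_0}(\cdot)^{\mathrm{T}} h$ with the claimed integral representation, and boundedness in $\ell^{\infty}(\mathcal{G}_D)$ is again immediate from $\int_S \Vert \dot f_{\fth_0}\Vert \dt(\tilde x,\tilde t) \le \int_S h_{\fth_0} \dt(\tilde x, \tilde t) < \infty$.

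The only delicate point I anticipate is the dominator: one needs $h_{\fth}$ to bound $\Vert\dot f_{\fth}\Vert$ \emph{uniformly} for $\fth$ in a neighbourhood of $\fth_0$, not merely pointwise; assumption \ref{A2} provides exactly this (and more, globally on $\Theta$), which is why Bauer's Lemma 16.2 can be invoked to rigorously justify interchanging differentiation and integration in the very first step of identifying the derivative.
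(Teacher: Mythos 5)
Your proof is correct and follows essentially the same route as the paper's: reduce the remainder to a pointwise integrand, obtain uniformity over $\mathcal{G}_D$ for free by enlarging the integration region to all of $D$ (resp.\ $S$), and pass to the limit by a dominated-convergence argument under \ref{A2}. The only difference is cosmetic — you make the final limit explicit via the fundamental theorem of calculus and dominated convergence on $[0,1]\times S$, where the paper simply cites Scheff\'e's lemma.
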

Note that $\dot{g}_{\fth_0}$ is short for $\dot{g}_{\fth}|_{\fth = \fth_0}$.
 Estimating the parameter causes the limit process of term \circled{1} in \eqref{e2} to no longer be a Brownian bridge process, but it remains Gaussian \citep[Theorem 19.23]{vaart1998}. The limit process is derived by decomposing \circled{1} once more, and then using the asymptotic linearity \ref{A4} of the estimating sequence. (We will show the asymptotic linearity for our specific models in Section \ref{secparmod}.) 
The distribution is summarized in Table \ref{tabgenesis} as the second line and compared to the situation of both $\boldsymbol{\theta}_0$ and $n$ being known. 
	\begin{table}[H]
		\caption{Asymptotic distributions of the test statistic before taking supremum, \eqref{dreieck}, for increasing unavailability of $\boldsymbol{\theta}_0$ and $n$}
		\label{tabgenesis}
		\begin{center}
			\begin{tabular}{cc}
				\toprule
				Test statistic \eqref{e1} & Gaussian process representation  	 \\ \midrule[1.3pt]
				$\boldsymbol{\theta}_0,n$ & $\mathbb{B}_{\mathbb{P}_{\boldsymbol{\theta}_0}}$\\ 
				$\hat{\boldsymbol{\theta}}_n, n$ & $\mathbb{B}_{\mathbb{P}_{\boldsymbol{\theta}_0}}-\mathbb{B}_{\mathbb{P}_{\boldsymbol{\theta}_0}}( \phi_{\boldsymbol{\theta}_0}) \cdot  \mathbb{\dot{E}}_{\boldsymbol{\theta}_0}$ \\ 
				$\boldsymbol{\theta}_0, \hat{n}$  & $\mathbb{G}^{\mathds{1}}_{\mathbb{P}_{\theta_0}}:= \mathbb{B}_{\mathbb{P}_{\boldsymbol{\theta}_0}} -\mathbb{B}_{\mathbb{P}_{\boldsymbol{\theta}_0}}(\mathds{1}_{D}) \cdot \mathbb{E}_{\boldsymbol{\theta}_0}/\alpha_{\boldsymbol{\theta}_0}$  \\
				$\hat{\boldsymbol{\theta}}_n, \hat{n}$ & $\mathbb{H}_{\mathbb{P}_{\boldsymbol{\theta}_0}}=\mathbb{G}^{\mathds{1}}_{\mathbb{P}_{\boldsymbol{\theta}_0}}+\mathbb{B}_{\mathbb{P}_{\boldsymbol{\theta}_0}}(\phi_{\boldsymbol{\theta}_0}) \cdot \left[\frac{\dot{\alpha}_{\boldsymbol{\theta}_0}}{\alpha_{\boldsymbol{\theta}_0}}\mathbb{E}_{\boldsymbol{\theta}_0}-\dot{\mathbb{E}}_{\boldsymbol{\theta}_0}\right]$ \\
				\bottomrule[1.3pt]	
			\end{tabular}
		\end{center}
	\end{table}

Let us prepare for the estimation of $n$. With the mentioned asymptotic linearity, we can also rewrite term \circled{2} of equation \eqref{e2}. (The proof can be found in Appendix \ref{App12}.)

\begin{lemma}\label{l2}\textbf{}\\
	With Assumptions \textup{\ref{A1}-\ref{A5}} and under $H_0$, the asymptotic linearity
	\begin{equation}\label{e3}
		 \sqrt{n}( \sqrt{\alpha_{\hat{\fth}_n}}- \sqrt{\alpha_{\fth_0}})=\frac{1}{\sqrt{n}} \sum_{i=1}^n \phi_{\fth_0}(X_i,T_i)^{\textup{T}}\cdot \frac{\dot{\alpha}_{\fth_0}}{2\sqrt{\alpha_{\fth_0}}}+o_{\mathbb{P}_{\fth_0}}(1)
	\end{equation}
	holds.
\end{lemma}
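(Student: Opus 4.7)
The approach is a textbook application of the delta method to the mapping $\fth \mapsto \sqrt{\alpha_{\fth}}$, combined with the asymptotic linearity assumed in \ref{A4}. The whole argument rests on two pillars: (i) that $\fth \mapsto \alpha_{\fth}$ is continuously differentiable, so the square root of it can be linearised around $\fth_0$, and (ii) that the consistent estimator $\hat{\fth}_n$ allows us to replace derivatives at intermediate points by derivatives at $\fth_0$.

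First I would establish that $\fth \mapsto \alpha_{\fth} = \int_0^G\int_t^{t+s} f_{\fth}(x,t)\,\dt x\,\dt t$ is continuously differentiable on $\Theta$ with gradient $\dot{\alpha}_{\fth} = \int_0^G\int_t^{t+s} \dot{f}_{\fth}(x,t)\,\dt x\,\dt t$. This is immediate from \ref{A2} by the standard ``differentiation under the integral sign'' result (Bauer's Lemma 16.2 as cited in the paper): the pointwise continuous differentiability of $\fth \mapsto f_{\fth}$ together with the integrable dominating function $h_{\fth}$ lets us interchange $\partial/\partial\fth$ and the integral. Assumption \ref{A1} guarantees $\alpha_{\fth} > 0$, so the map $\fth \mapsto \sqrt{\alpha_{\fth}}$ is likewise continuously differentiable with gradient $\dot{\alpha}_{\fth}/(2\sqrt{\alpha_{\fth}})$.

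Next I would perform a first-order Taylor expansion (or equivalently invoke the mean value theorem along the segment joining $\fth_0$ and $\hat{\fth}_n$): there exists an intermediate point $\tilde{\fth}_n$ on that segment with
\begin{equation*}
\sqrt{\alpha_{\hat{\fth}_n}} - \sqrt{\alpha_{\fth_0}} = \frac{\dot{\alpha}_{\tilde{\fth}_n}^{\textup{T}}}{2\sqrt{\alpha_{\tilde{\fth}_n}}}\,(\hat{\fth}_n - \fth_0).
\end{equation*}
By \ref{A3}, $\tilde{\fth}_n \xrightarrow{P} \fth_0$, so by continuity of $\fth \mapsto \dot{\alpha}_{\fth}/(2\sqrt{\alpha_{\fth}})$ (continuity of $\dot{\alpha}_{\fth}$ from the dominated convergence argument above, of $\sqrt{\alpha_{\fth}}$ as composition, and $\alpha_{\fth_0}>0$) we get
\begin{equation*}
\frac{\dot{\alpha}_{\tilde{\fth}_n}}{2\sqrt{\alpha_{\tilde{\fth}_n}}} \xrightarrow{P} \frac{\dot{\alpha}_{\fth_0}}{2\sqrt{\alpha_{\fth_0}}}.
\end{equation*}

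Finally, multiply through by $\sqrt{n}$ and insert \ref{A4}. Writing the intermediate gradient as $\dot{\alpha}_{\fth_0}/(2\sqrt{\alpha_{\fth_0}})$ plus a $o_{\mathbb{P}_{\fth_0}}(1)$ remainder, and observing that $\sqrt{n}(\hat{\fth}_n-\fth_0) = O_{\mathbb{P}_{\fth_0}}(1)$ by \ref{A4} and the CLT (using the finite-variance condition in \ref{A5}), Slutsky's lemma then yields
\begin{equation*}
\sqrt{n}\bigl(\sqrt{\alpha_{\hat{\fth}_n}}-\sqrt{\alpha_{\fth_0}}\bigr) = \frac{\dot{\alpha}_{\fth_0}^{\textup{T}}}{2\sqrt{\alpha_{\fth_0}}}\,\sqrt{n}(\hat{\fth}_n-\fth_0) + o_{\mathbb{P}_{\fth_0}}(1) = \frac{1}{\sqrt{n}}\sum_{i=1}^n \phi_{\fth_0}(X_i,T_i)^{\textup{T}}\cdot\frac{\dot{\alpha}_{\fth_0}}{2\sqrt{\alpha_{\fth_0}}} + o_{\mathbb{P}_{\fth_0}}(1),
\end{equation*}
which is exactly \eqref{e3}. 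I do not expect a genuine obstacle here: the only step that needs care is the justification of $\dot{\alpha}_{\fth}=\int\dot{f}_{\fth}$ and its continuity, but \ref{A2} was tailor-made for precisely this purpose. The argument is really the scalar delta method in disguise, and the \ref{A4}-linear representation of $\hat{\fth}_n$ carries all the stochastic content.
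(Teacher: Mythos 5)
Your proposal is correct and follows essentially the same route as the paper: differentiate $\alpha_{\fth}$ under the integral sign via \ref{A2} and Bauer's Lemma 16.2, deduce continuous differentiability of $\fth\mapsto\sqrt{\alpha_{\fth}}$ from \ref{A1}, linearise at $\fth_0$, and conclude with \ref{A3}--\ref{A5}. The only cosmetic difference is that the paper phrases the linearisation through the $o(\Vert\fth-\fth_0\Vert)$ remainder of differentiability combined with van der Vaart's Lemma 2.12, whereas you use a mean-value intermediate point plus continuity of the gradient; both are equivalent instances of the delta method, and your final Slutsky step matches the paper's observation that $\sqrt{n}\,o_{\mathbb{P}_{\fth_0}}(\Vert\hat{\fth}_n-\fth_0\Vert)=o_{\mathbb{P}_{\fth_0}}(1)$ because $\sqrt{n}(\hat{\fth}_n-\fth_0)=O_{\mathbb{P}_{\fth_0}}(1)$.
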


The convergence in distribution of the univariate sequence in term \circled{3} of equation \eqref{e2} can be obtained with the delta method \citep[see e.g.][Section 3.1]{vaart1998}. Note that, because $M_n$ is a binomial random variable,  $\sqrt{n}\left(\frac{1}{n}M_n-\alpha_{\fth_0}\right)\to \mathcal{N}\left(0, \alpha_{\fth_0}(1-\alpha_{\fth_0})\right)$ as $n \to \infty$. Thus, we select $x \mapsto\sqrt{x}$ as a function for the delta method, which is differentiable with $(2\sqrt{x})^{-1}$, $x>0$, so that
\begin{equation}\label{e4}
		\sqrt{n}\left(\sqrt{M_n/n}-\sqrt{\alpha_{\fth_0}}\right)
		\to \mathcal{N}(0,(1-\alpha_{\fth_0})/4).
\end{equation}
The following lemma provides an approximation of term \circled{4}, the last remaining in equation \eqref{e2}. The proof is provided in Appendix \ref{App13}.
\begin{lemma}\label{l1}\textbf{}\\
	Under Assumptions \textup{\ref{A1}-\ref{A3}} and $H_0$, it is
	\begin{equation*}
		\left\Vert \frac{\sqrt{n}}{\sqrt{M_n}} \mathbb{P}_n(g_{x,t,D})+ \frac{\mathbb{E}_{\hat{\fth}_n}(g_{x,t,D})}{\sqrt{\alpha_{\hat{\fth}_n}}} - 2 \cdot \frac{\mathbb{E}_{\fth_0}(g_{x,t,D})}{\sqrt{\alpha_{\fth_0}}} \right\Vert_{\mathcal{G}_D}\to 0
	\end{equation*}
	in probability $\mathbb{P}_{\fth_0}$ as $n \to \infty$. 
\end{lemma}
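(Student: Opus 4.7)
The plan is to show that each of the two ``observable'' quantities on the left-hand side converges, uniformly in $g_{x,t,D}\in\mathcal{G}_D$, to $\mathbb{E}_{\fth_0}(g_{x,t,D})/\sqrt{\alpha_{\fth_0}}$ in probability, and then add the two convergences. Concretely, I would bound the norm by
\[
\left\Vert \tfrac{\sqrt n}{\sqrt{M_n}}\mathbb{P}_n(g_{x,t,D})-\tfrac{\mathbb{E}_{\fth_0}(g_{x,t,D})}{\sqrt{\alpha_{\fth_0}}}\right\Vert_{\mathcal{G}_D}
+\left\Vert \tfrac{\mathbb{E}_{\hat\fth_n}(g_{x,t,D})}{\sqrt{\alpha_{\hat\fth_n}}}-\tfrac{\mathbb{E}_{\fth_0}(g_{x,t,D})}{\sqrt{\alpha_{\fth_0}}}\right\Vert_{\mathcal{G}_D}
\]
and treat the two summands separately. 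The key observation that makes the supremum tractable is that the ``annoying'' random factors $\sqrt{n/M_n}$ and $\hat\fth_n$ do not depend on $(x,t)$, so they can be pulled in front of the norm.

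For the first summand I would add and subtract $\alpha_{\fth_0}^{-1/2}\mathbb{P}_n(g_{x,t,D})$ to obtain
\[
\left\vert \tfrac{\sqrt n}{\sqrt{M_n}}-\tfrac{1}{\sqrt{\alpha_{\fth_0}}}\right\vert\cdot\Vert \mathbb{P}_n\Vert_{\mathcal{G}_D}+\tfrac{1}{\sqrt{\alpha_{\fth_0}}}\Vert \mathbb{P}_n-\mathbb{E}_{\fth_0}\Vert_{\mathcal{G}_D}.
\]
The second factor in the first product is bounded by $\Vert \mathds{1}_D\Vert_\infty\le 1$; the first factor tends to zero in probability by the LLN applied to $M_n/n=\mathbb{P}_n(\mathds{1}_D)\to\alpha_{\fth_0}$ together with Assumption \ref{A1} and the continuous-mapping theorem for $x\mapsto 1/\sqrt{x}$ on a neighbourhood of $\alpha_{\fth_0}>0$. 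The remaining norm $\Vert \mathbb{P}_n-\mathbb{E}_{\fth_0}\Vert_{\mathcal{G}_D}\to 0$ (outer) almost surely by the Glivenko-Cantelli property of $\mathcal{G}_D$ established in Corollary \ref{cor2}.

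For the second summand I would use Corollary \ref{cor1}: Fréchet differentiability of $\fth\mapsto\mathbb{E}_\fth$ at $\fth_0$ (as a map into $\ell^\infty(\mathcal{G}_D)$) implies continuity there, so by Assumption \ref{A3} and the continuous-mapping theorem $\Vert \mathbb{E}_{\hat\fth_n}-\mathbb{E}_{\fth_0}\Vert_{\mathcal{G}_D}\to 0$ in probability. Differentiating under the integral sign (licensed by Assumption \ref{A2}) gives continuity of $\fth\mapsto\alpha_\fth$, so $\alpha_{\hat\fth_n}\to\alpha_{\fth_0}$ in probability, and \ref{A1} lets the continuous-mapping theorem turn this into convergence of $1/\sqrt{\alpha_{\hat\fth_n}}$. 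One last add-and-subtract,
\[
\left\vert\tfrac{1}{\sqrt{\alpha_{\hat\fth_n}}}-\tfrac{1}{\sqrt{\alpha_{\fth_0}}}\right\vert\cdot\Vert\mathbb{E}_{\hat\fth_n}\Vert_{\mathcal{G}_D}+\tfrac{1}{\sqrt{\alpha_{\fth_0}}}\Vert\mathbb{E}_{\hat\fth_n}-\mathbb{E}_{\fth_0}\Vert_{\mathcal{G}_D},
\]
together with the uniform bound $\Vert\mathbb{E}_{\hat\fth_n}\Vert_{\mathcal{G}_D}\le 1$, closes this piece.

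The only real pitfall I anticipate is the uniformity in $\mathcal{G}_D$: one should not be tempted to argue pointwise and then take the sup, because the supremum over an uncountable class can behave badly. The device that makes the argument go through is exactly that the randomness in $M_n$ and $\hat\fth_n$ enters only through scalars independent of $(x,t)$, so after one add-and-subtract per summand the remaining function-valued error is controlled uniformly either by Glivenko--Cantelli or by continuity in $\ell^\infty(\mathcal{G}_D)$ furnished by Corollary \ref{cor1}.
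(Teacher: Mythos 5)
Your proposal is correct and follows essentially the same route as the paper's proof: the same two-term decomposition, the same add-and-subtract of $1/\sqrt{\alpha_{\fth_0}}$ combined with the LLN and the Glivenko--Cantelli property of $\mathcal{G}_D$ for the first term, and continuity of $\fth\mapsto\mathbb{E}_{\fth}$ in $\ell^{\infty}(\mathcal{G}_D)$ (from Corollary \ref{cor1}) together with \ref{A1}, \ref{A3} and the continuous mapping theorem for the second. The only cosmetic difference is that the paper evaluates $\Vert\mathbb{P}_n\Vert_{\mathcal{G}_D}=M_n/n\to\alpha_{\fth_0}$ rather than bounding it by $1$.
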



\subsection{Estimation of $\fth_0$ and $\mathbf{n}$}  \label{3drei}

We now combine the four series in decomposition \eqref{e2} of \eqref{e1}.
\begin{theorem}\label{th1}\textbf{}\\
	Suppose that \textup{\ref{A1}} - \textup{\ref{A5}} hold, then the sequence
	\begin{equation*}
		\left\{\frac{\sqrt{\alpha_{\hat{\fth}_n}}}{\sqrt{M_n}}n\mathbb{P}_n(g_{x,t,D})-\frac{\sqrt{M_n}}{\sqrt{\alpha_{\hat{\fth}_n}}}\mathbb{E}_{\hat{\fth}_n}(g_{x,t,D}): g_{x,t,D} \in \mathcal{G}_D\right\}
	\end{equation*}
	converges in distribution in $\ell^{\infty}(\mathcal{G}_D)$ under $H_0$ to the Gaussian process  with values
	\begin{align*}
		\mathbb{H}_{\mathbb{P}_{\fth_0}}(g_{x,t,D}):=\,& \mathbb{B}_{\mathbb{P}_{\fth_0}} (g_{x,t,D})-\left(\mathbb{B}_{\mathbb{P}_{\fth_0}}(\phi_{\fth_0,1}),\mathbb{B}_{\mathbb{P}_{\fth_0}}(\phi_{\fth_0,2})\right) \cdot \dot{\mathbb{E}}_{\fth_0}(g_{x,t,D})\\
		&+\left[\left(\mathbb{B}_{\mathbb{P}_{\fth_0}}( \phi_{\fth_0,1}), \mathbb{B}_{\mathbb{P}_{\fth_0}}( \phi_{\fth_0,2})\right)  \cdot \dot{\alpha}_{\fth_0}
		- \mathbb{B}_{\mathbb{P}_{\fth_0}}(\mathds{1}_D) \right]\cdot \frac{\mathbb{E}_{\fth_0}(g_{x,t,D})}{\alpha_{\fth_0}}
	\end{align*}
	with $g_{x,t,D} \in \mathcal{G}_D$.
\end{theorem}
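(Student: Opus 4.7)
The plan is to work from the decomposition \eqref{e2} of the test statistic \eqref{e1} into the four pieces \circled{1}--\circled{4}, identify the weak limit of each (in $\ell^{\infty}(\mathcal{G}_D)$ for the function-valued ones, as scalars for \circled{2} and \circled{3}), and then recombine them via the continuous mapping theorem. The unifying device is to express every stochastic fluctuation as a value of the empirical process $\mathbb{G}_n$ applied to a function in an augmented, still Donsker, class.

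For term \circled{1} I would split
$$\sqrt{n}\big(\mathbb{P}_n(g_{x,t,D}) - \mathbb{E}_{\hat{\fth}_n}(g_{x,t,D})\big) = \mathbb{G}_n(g_{x,t,D}) - \sqrt{n}\big(\mathbb{E}_{\hat{\fth}_n} - \mathbb{E}_{\fth_0}\big)(g_{x,t,D}).$$
The first summand converges weakly to $\mathbb{B}_{\mathbb{P}_{\fth_0}}(g_{x,t,D})$ by Corollary \ref{cor2}. For the second, the Fréchet differentiability supplied by Corollary \ref{cor1} together with the consistency \ref{A3} yields, uniformly in $g_{x,t,D} \in \mathcal{G}_D$,
$$\sqrt{n}\big(\mathbb{E}_{\hat{\fth}_n} - \mathbb{E}_{\fth_0}\big)(g_{x,t,D}) = \sqrt{n}(\hat{\fth}_n-\fth_0)^{\textup{T}}\dot{\mathbb{E}}_{\fth_0}(g_{x,t,D}) + o_{\mathbb{P}_{\fth_0}}(1);$$
assumption \ref{A4} then substitutes $\sqrt{n}(\hat{\fth}_n - \fth_0)$ by the vector $(\mathbb{G}_n(\phi_{\fth_0,1}), \mathbb{G}_n(\phi_{\fth_0,2}))^{\textup{T}}$ up to a further $o_{\mathbb{P}_{\fth_0}}(1)$ scalar.

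Terms \circled{2}--\circled{4} are already prepared in Section \ref{3zwei}. Lemma \ref{l2} rewrites \circled{2} as $(\mathbb{G}_n(\phi_{\fth_0,1}), \mathbb{G}_n(\phi_{\fth_0,2}))\,\dot{\alpha}_{\fth_0}/(2\sqrt{\alpha_{\fth_0}})$ plus a negligible remainder. For \circled{3}, the identities $M_n/n = \mathbb{P}_n(\mathds{1}_D)$ and $\alpha_{\fth_0} = \mathbb{E}_{\fth_0}(\mathds{1}_D)$ combined with the scalar delta method of \eqref{e4} give $\sqrt{M_n} - \sqrt{n}\sqrt{\alpha_{\fth_0}} = \mathbb{G}_n(\mathds{1}_D)/(2\sqrt{\alpha_{\fth_0}}) + o_{\mathbb{P}_{\fth_0}}(1)$. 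Lemma \ref{l1} already delivers the uniform-in-probability limit $2\mathbb{E}_{\fth_0}(g_{x,t,D})/\sqrt{\alpha_{\fth_0}}$ for \circled{4}.

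To assemble the four pieces, I would enlarge the function class to $\tilde{\mathcal{G}} := \mathcal{G}_D \cup \{\phi_{\fth_0,1}, \phi_{\fth_0,2}, \mathds{1}_D\}$, which remains Donsker because adjoining finitely many bounded measurable functions (boundedness guaranteed by \ref{A5}) preserves the Donsker property. Joint weak convergence of $\mathbb{G}_n$ on $\tilde{\mathcal{G}}$ provides a Gaussian limit $\mathbb{B}_{\mathbb{P}_{\fth_0}}$ on the whole enlarged class, and the continuous mapping theorem -- applied to sums and to the product of a scalar weakly convergent factor (\circled{2}$-$\circled{3}) with a uniformly convergent deterministic limit (\circled{4}) -- recombines the four contributions into the claimed $\mathbb{H}_{\mathbb{P}_{\fth_0}}(g_{x,t,D})$. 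The main obstacle I anticipate is ensuring every remainder is $o_{\mathbb{P}_{\fth_0}}(1)$ \emph{uniformly} in $g_{x,t,D} \in \mathcal{G}_D$, since convergence must occur in the sup-norm on $\ell^{\infty}(\mathcal{G}_D)$; this is precisely what the Fréchet (rather than merely Gâteaux) character of the derivative in Corollary \ref{cor1} and the sup-norm statement of Lemma \ref{l1} are designed to deliver, with the scalar pieces \circled{2} and \circled{3} automatically uniform because they do not depend on $g_{x,t,D}$.
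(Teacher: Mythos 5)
Your proposal is correct and follows essentially the same route as the paper: the same decomposition \eqref{e2} into terms \circled{1}--\circled{4}, the same use of Corollaries \ref{cor2} and \ref{cor1}, Lemmas \ref{l2} and \ref{l1}, the augmentation of $\mathcal{G}_D$ by the (bounded, hence Donsker) functions $\phi_{\fth_0,1},\phi_{\fth_0,2}$, and the final recombination via the continuous mapping theorem. The only deviation is in term \circled{3}, where you linearize the scalar $\sqrt{M_n/n}$ directly as $\mathbb{G}_n(\mathds{1}_D)/(2\sqrt{\alpha_{\fth_0}})+o_{\mathbb{P}_{\fth_0}}(1)$ by the ordinary delta method, whereas the paper carries the square root through the functional delta method with the Hadamard-differentiable map $\phi_{\Delta}$; both are valid (and yield the same limit $\mathbb{B}_{\mathbb{P}_{\fth_0}}(\mathds{1}_D)/(2\sqrt{\alpha_{\fth_0}})$) since that term does not depend on $g_{x,t,D}$.
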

Before the proof, we would like to stress that the practically necessary covariance function will be derived in detail later. 

\begin{proof}
 The first step involves a series of approximations in terms of the
 	norm $\ell^{\infty}(\mathcal{G_D})$. For this purpose, the Lemma \ref{l1} is applied to term \circled{4} in \eqref{e2} and Lemma \ref{l2} is applied to term \circled{2} of \eqref{e2}:
	\begin{align*}
		\eqref{e1}=\eqref{e2}=&\,\sqrt{n}\left( \mathbb{P}_n(g_{x,t,D})- \mathbb{E}_{\hat{\fth}_n}(g_{x,t,D})\right)\\
		&+\left[ \frac{1}{\sqrt{n}} \sum_{i=1}^n \phi_{\fth_0}(X_i,T_i)^{\textup{T}}\cdot \frac{\dot{\alpha}_{\fth_0}}{2\sqrt{\alpha_{\fth_0}}}-\left(\sqrt{M_n}-\sqrt{n} \sqrt{\alpha_{\fth_0}}\right)\right]\\
		&\cdot \left[2\, \mathbb{E}_{\fth_0}(g_{x,t,D})/\sqrt{\alpha_{\fth_0}}\right]+o_{\mathbb{P}_{\fth_0}}(1)
	\end{align*}
	The linear representation of Lemma \ref{l2}, as well as the observed sample size $M_n$ and the observation probability $\alpha_{\fth_0}$, can be expressed in terms of the empirical process too, i.e.
	\begin{align*}
		&\frac{1}{\sqrt{n}} \sum_{i=1}^n \phi_{\fth_0}(X_i,T_i)=\sqrt{n}\big(\mathbb{P}_n(\phi_{\fth_0})-\mathbb{E}_{\fth_0}(\phi_{\fth_0})\big) \quad \text{and}\\
		&\sqrt{M_n}-\sqrt{n} \sqrt{\alpha_{\fth_0}}=\sqrt{n}\left(\sqrt{\mathbb{P}_n(\mathds{1}_D)}
		-\sqrt{\mathbb{E}_{\fth_0}(\mathds{1}_D)}\right),
	\end{align*}
	where $\mathbb{E}_{\fth_0}(\phi_{\fth_0})=\boldsymbol{0}$ (as per \ref{A5}). Inserting these two terms results in
	\begin{align}\label{e8}
		\begin{split}
		\eqref{e1}=&\,\sqrt{n}\left( \mathbb{P}_n(g_{x,t,D})- \mathbb{E}_{\hat{\fth}_n}(g_{x,t,D})\right)\\
		&+\left[\sqrt{n}\big(\mathbb{P}_n(\phi_{\fth_0})-\mathbb{E}_{\fth_0}(\phi_{\fth_0})\big)^{\textup{T}}\cdot \frac{\dot{\alpha}_{\fth_0}}{2\sqrt{\alpha_{\fth_0}}}-\sqrt{n}\left(\sqrt{\mathbb{P}_n(\mathds{1}_D)}
		-\sqrt{\mathbb{E}_{\fth_0}(\mathds{1}_D)}\right)\right]\\
		&\cdot \left[2\, \mathbb{E}_{\fth_0}(g_{x,t,D})/\sqrt{\alpha_{\fth_0}}\right]+o_{\mathbb{P}_{\fth_0}}(1).
		\end{split}
	\end{align} 
	The third and final approximation is assigned to term \circled{1} of the expression \eqref{e2}. As previously mentioned, the convergence in distribution of this expression was proven in \citet[][Theorem 19.23]{vaart1998}. The proof of this theorem is based on the following decomposition: \begin{align*}
		\sqrt{n}(&\mathbb{P}_n-\mathbb{E}_{\hat{\fth}_n})(g_{x,t,D})=\sqrt{n}(\mathbb{P}_n-\mathbb{E}_{\fth_0})(g_{x,t,D})-\sqrt{n}(\mathbb{E}_{\hat{\fth}_n}-\mathbb{E}_{\fth_0})(g_{x,t,D})\nonumber\\
		&=\sqrt{n}(\mathbb{P}_n-\mathbb{E}_{\fth_0})(g_{x,t,D})- \sqrt{n}(\hat{\fth}_n-\fth_0)^{\textup{T}}\cdot \dot{\mathbb{E}}_{\fth_0}(g_{x,t,D})+o_{\mathbb{P}_{\fth_0}}(1).
	\end{align*}
	Thereafter, the Fréchet differentiability of the function $\fth \mapsto \mathbb{E}_{\fth}$ from Corollary \ref{cor1} and the asymptotic linearity of the estimation sequence $\sqrt{n}(\hat{\fth}_n-\fth_0)$ from Assumptions \ref{A4} and \ref{A5} are used. Rewriting the first summand in \eqref{e8} leads to:
	\begin{align}\label{e7}
		\begin{split}
			&\sqrt{n}\big(\mathbb{P}_n(g_{x,t,D})-\mathbb{E}_{\fth_0}(g_{x,t,D})\big)-\sqrt{n}\big(\mathbb{P}_n(\phi_{\fth_0})-\mathbb{E}_{\fth_0}(\phi_{\fth_0})\big)^{\textup{T}} \cdot \dot{\mathbb{E}}_{\fth_0}(g_{x,t,D})\\
			&+\Big[\sqrt{n}\big(\mathbb{P}_n(\phi_{\fth_0})-\mathbb{E}_{\fth_0}(\phi_{\fth_0})\big)^{\textup{T}} \cdot \frac{\dot{\alpha}_{\fth_0}}{2\sqrt{\alpha_{\fth_0}}}-\sqrt{n}\left(\sqrt{\mathbb{P}_n(\mathds{1}_D)}
			-\sqrt{\mathbb{E}_{\fth_0}(\mathds{1}_D)}\right)\Big]\\
			&\quad\cdot \left[2\, \mathbb{E}_{\fth_0}(g_{x,t,D})/\sqrt{\alpha_{\fth_0}}\right]+o_{\mathbb{P}_{\fth_0}}(1).
		\end{split}
	\end{align}
	The objective is now to derive the limiting distribution of the process delineated in \eqref{e7}. Corollary \ref{cor2} states that $\mathcal{G}_D$ is a Donsker class and, with Assumption \ref{A5}, it is $\Vert \phi_{\fth_0}\Vert_{\infty} < \infty$. By means of elementary considerations about the bracketing numbers, the finite class $\{\phi_{\fth_0}\}$ is Donsker too. As demonstrated in \citet[][Example 2.10.7]{vawell1996}, the Donsker property of two classes extends to their union with $\mathcal{G}_D^{\phi}:=\mathcal{G}_D \cup \{\phi_{\fth_0}\}$. 
	Accordingly, the sequence $\{\sqrt{n}(\mathbb{P}_n-\mathbb{E}_{\fth_0})(g_{x,t,D}): \, g_{x,t,D} \in\mathcal{G}_D^{\phi}\}$ converges in distribution in $\ell^{\infty}(\mathcal{G}_D^{\phi})$ to a $\mathbb{P}_{\fth_0}$-Brownian bridge $\mathbb{B}_{\mathbb{P}_{\fth_0}}$. The map
	\begin{equation*}
		\varphi: \left(\sqrt{n}(\mathbb{P}_n-\mathbb{E}_{\fth_0})\right) \mapsto 
		\begin{pmatrix}
			\sqrt{n}\big(\mathbb{P}_n-\mathbb{E}_{\fth_0}\big) \\
			\sqrt{n}\big(\mathbb{P}_n(\mathds{1}_D)-\mathbb{E}_{\fth_0}(\mathds{1}_D)\big)\\
			\sqrt{n}\big(\mathbb{P}_n(\phi_{\fth_0,1})-\mathbb{E}_{\fth_0}(\phi_{\fth_0,1})\big)\\
			\sqrt{n}\big(\mathbb{P}_n(\phi_{\fth_0,2})-\mathbb{E}_{\fth_0}(\phi_{\fth_0,2})\big)
		\end{pmatrix}
	\end{equation*}
	is continuous for all $(\sqrt{n}(\mathbb{P}_n-\mathbb{E}_{\fth_0})) \in \ell^{\infty}(\mathcal{G}_D^{\phi})$. Therefore, the continuous mapping theorem can be applied to establish the convergence in distribution of $\{\varphi\big(\sqrt{n}(\mathbb{P}_n-\mathbb{E}_{\fth_0})\big)(g_{x,t,D}): \, g_{x,t,D} \in\mathcal{G}_D^{\phi}\}$ in $\ell^{\infty}(\mathcal{G}_D^{\phi})$ to $\varphi(\mathbb{B}_{\mathbb{P}_{\fth_0}})$, i.e. $\big(\mathbb{B}_{\mathbb{P}_{\fth_0}},\mathbb{B}_{\mathbb{P}_{\fth_0}}(\mathds{1}_D),\mathbb{B}_{\mathbb{P}_{\fth_0}}(\phi_{\fth_0,1}),\mathbb{B}_{\mathbb{P}_{\fth_0}}(\phi_{\fth_0,2})\big)^{\textup{T}}$,
 	as $n \to \infty$. A comparison of the function $\varphi$ with equation \eqref{e7} reveals that the incorporation of the missing transformations by the root function necessitates the utilization of the delta method. Here, smoothness uses the Hadamard derivative, introduced in Appendix \ref{App2}. For this purpose, consider the function
	\begin{equation*}
		\phi_{\Delta}:( \ell^{\infty}(\mathcal{G}_D^{\phi}))^4 \to (\ell^{\infty}(\mathcal{G}_D^{\phi}))^4 , \quad \phi_{\Delta}(x,y,z_1,z_2)=(x,\sqrt{y},z_1,z_2)^{\textup{T}},
	\end{equation*}
	which is Hadamard differentiable according to Lemma \ref{l3}, correspondingly extended to four dimensions, with $\phi'_{\Delta}(x,y,z_1,z_2)=\textup{diag}(1,(2\sqrt{y})^{-1},1,1)$.
	Given the convergence in distribution $\varphi(\sqrt{n}(\mathbb{P}_n-\mathbb{E}_{\fth_0})) \to \varphi(\mathbb{B}_{\mathbb{P}_{\fth_0}})$ in $\ell^{\infty}(\mathcal{G}_D^{\phi})$ and the Hadamard differentiability of $\phi_{\Delta}$, we can employ the functional delta method \citep[][Theorem 20.8]{vaart1998}. Using this method allows us to account for the transformation by the root function; however, our limiting distribution will be affected. It yields the sequence
	\begin{align*}
		&\sqrt{n}\left(	\phi_{\Delta}
		\begin{pmatrix}
			\mathbb{P}_n \\
			\mathbb{P}_n(\mathds{1}_D)\\
			\mathbb{P}_n(\phi_{\fth_0,1})\\
			\mathbb{P}_n(\phi_{\fth_0,2})
		\end{pmatrix}
		-\phi_{\Delta}
		\begin{pmatrix}
			\mathbb{E}_{\fth_0} \\
			\mathbb{E}_{\fth_0}(\mathds{1}_D)\\
			\mathbb{E}_{\fth_0}(\phi_{\fth_0,1})\\
			\mathbb{E}_{\fth_0}(\phi_{\fth_0,2})
		\end{pmatrix}
		\right)
		= 
		\begin{pmatrix}
			\sqrt{n}\big(\mathbb{P}_n-\mathbb{E}_{\fth_0}\big) \\
			\sqrt{n}\left(\sqrt{\mathbb{P}_n(\mathds{1}_D)}-\sqrt{\mathbb{E}_{\fth_0}(\mathds{1}_D)}\,\right)\\
			\sqrt{n}\big(\mathbb{P}_n(\phi_{\fth_0,1})-\mathbb{E}_{\fth_0}(\phi_{\fth_0,1})\big)\\
			\sqrt{n}\big(\mathbb{P}_n(\phi_{\fth_0,2})-\mathbb{E}_{\fth_0}(\phi_{\fth_0,2})\big)
		\end{pmatrix}\\ 
		&\to
		\phi_{\Delta}'
		\begin{pmatrix}
			\mathbb{E}_{\fth_0} \\
			\mathbb{E}_{\fth_0}(\mathds{1}_D)\\
			\mathbb{E}_{\fth_0}(\phi_{\fth_0,1})\\
			\mathbb{E}_{\fth_0}(\phi_{\fth_0,2})
		\end{pmatrix} 
		\cdot 
		\begin{pmatrix}
			\mathbb{B}_{\mathbb{P}_{\fth_0}} \\
			\mathbb{B}_{\mathbb{P}_{\fth_0}}(\mathds{1}_D)\\
			\mathbb{B}_{\mathbb{P}_{\fth_0}}(\phi_{\fth_0,1})\\
			\mathbb{B}_{\mathbb{P}_{\fth_0}}(\phi_{\fth_0,2})
		\end{pmatrix} 
		= 
		\begin{pmatrix}
			\mathbb{B}_{\mathbb{P}_{\fth_0}} \\
			\mathbb{B}_{\mathbb{P}_{\fth_0}}(\mathds{1}_D)/(2\sqrt{\alpha_{\fth_0}})\\
			\mathbb{B}_{\mathbb{P}_{\fth_0}}(\phi_{\fth_0,1})\\
			\mathbb{B}_{\mathbb{P}_{\fth_0}}(\phi_{\fth_0,2})
		\end{pmatrix}
	\end{align*}
	which converges in distribution in $\ell^{\infty}(\mathcal{G}_D^{\phi})^4$ as $n \to \infty$. Finally, the four components of the vector can be combined using the continuous mapping theorem from \citet[][Theorem 18.11]{vaart1998}, which results in the expression given in \eqref{e7}. Under $H_0$, the sequence
	\begin{align*}
		&\Big\{\sqrt{n}\big(\mathbb{P}_n-\mathbb{E}_{\fth_0}\big)-\sqrt{n}\big(\mathbb{P}_n(\phi_{\fth_0})-\mathbb{E}_{\fth_0}(\phi_{\fth_0})\big)^{\textup{T}} \cdot \dot{\mathbb{E}}_{\fth_0}\\
		&+\Big[\sqrt{n}\big(\mathbb{P}_n(\phi_{\fth_0})-\mathbb{E}_{\fth_0}(\phi_{\fth_0})\big)^{\textup{T}} \cdot \frac{\dot{\alpha}_{\fth_0}}{2\sqrt{\alpha_{\fth_0}}}-\sqrt{n}\left(\sqrt{\mathbb{P}_n(\mathds{1}_D)}
		-\sqrt{\mathbb{E}_{\fth_0}(\mathds{1}_D)}\right)\Big]\\
		&\quad\cdot \left[2\, \mathbb{E}_{\fth_0}/\sqrt{\alpha_{\fth_0}}\right](g_{x,t,D}): g_{x,t,D} \in \mathcal{G}_D^{\phi}\Big\}	
	\end{align*}
	converges in distribution in $\ell^{\infty}(\mathcal{G}_D^{\phi})$ and therefore also in $\ell^{\infty}(\mathcal{G}_D)$ \cite[see][Theorem 19.23]{vaart1998} to the Gaussian process
	\begin{align*}
		\mathbb{H}_{\mathbb{P}_{\fth_0}}(g_{x,t,D})= & \, \mathbb{B}_{\mathbb{P}_{\fth_0}} (g_{x,t,D})-\left(\mathbb{B}_{\mathbb{P}_{\fth_0}}(\phi_{\fth_0,1}),\mathbb{B}_{\mathbb{P}_{\fth_0}}(\phi_{\fth_0,2})\right) \cdot \dot{\mathbb{E}}_{\fth_0}(g_{x,t,D})\\
		&+\left[\left(\mathbb{B}_{\mathbb{P}_{\fth_0}}( \phi_{\fth_0,1}), \mathbb{B}_{\mathbb{P}_{\fth_0}}( \phi_{\fth_0,2})\right)  \cdot \dot{\alpha}_{\fth_0}
		- \mathbb{B}_{\mathbb{P}_{\fth_0}}(\mathds{1}_D) \right]\cdot \frac{\mathbb{E}_{\fth_0}(g_{x,t,D})}{\alpha_{\fth_0}}
	\end{align*}
	for $g_{x,t,D} \in \mathcal{G}_D$. \qed
\end{proof}

Table \ref{tabgenesis} lists the distribution and also compares it to the situation of known $\fth_0$ and estimated $n$ (with derivation suppressed here) in order to show the increasing complexity.

The corresponding convergence of the sequence in the supremum norm $\Vert \cdot \Vert_{\mathcal{G}_D}$, i.e. the convergence of the modified Kolmogorov-Smirnov test statistic \eqref{e11} (or equivalently \eqref{e10}), follows from the continuous mapping theorem.


\subsection{Computation of critical value} \label{sec4_4}

The distribution of $\mathbb{H}_{\mathbb{P}_{\fth_0}}$ has been derived under hypothesis in Theorem \ref{th1}. We now determine that of $\Vert \mathbb{H}_{\mathbb{P}_{\fth_0}} \Vert_{\mathcal{G}_D}$, which can be simulated using the involved Gaussian processes. The Gaussian process $\mathbb{H}_{\mathbb{P}_{\fth_0}}$ has zero mean and Appendix \ref{App3} derives as a general representation of the covariance between $\mathbb{H}_{\mathbb{P}_{\fth_0}}(g_{x_1,t_1,D})$ and $\mathbb{H}_{\mathbb{P}_{\fth_0}}(g_{x_2,t_2,D})$: 
\begin{align}\label{e9}
	\begin{split}
	&\mathbb{E}_{\fth_0}(g_{x_1,t_1,D}g_{x_2,t_2,D})-\mathbb{E}_{\fth_0}(g_{x_1,t_1,D})\mathbb{E}_{\fth_0}(g_{x_2,t_2,D})/\alpha_{\fth_0}+K_2^{\textup{T}} \mathbb{E}_{\fth_0}(g_{x_1,t_1,D}\phi_{\fth_0})\\
	&\quad+K_1^{\textup{T}}\mathbb{E}_{\fth_0}(g_{x_2,t_2,D} \phi_{\fth_0})+K_1^{\textup{T}} \mathbb{E}_{\fth_0}(\phi_{\fth_0}\phi_{\fth_0}^{\textup{T}})K_2,
	\end{split}
\end{align}
where $K_i=(K_{i,\theta_0}, K_{i, \vth_0})^{\textup{T}}$ and $K_{i,j}:=\left(\mathbb{E}_{\fth_0}g_{x_i,t_i,D}\frac{\dot{\alpha}_{j}}{\alpha_{\fth_0}}-\dot{\mathbb{E}}_{j}g_{x_i,t_i,D}\right)$, for $i=1,2$ and $j \in \{\theta_0, \vth_0\}$. (Note that $i$ and $j$ here are not indicators for units as in Section \ref{popmodass}.)  For the models of Section \ref{secparmod}, more specific representations will be provided. Therefore, the expectations of the indicator function $g_{x_i,t_i,D}$ and of the score function $\phi_{\fth_0}$, as well as the derivative $\dot{\mathbb{E}}_{j}$ need to be calculated. Those will be given in closed form in Appendix \ref{App4}. (For calculating $\mathbb{E}_{\theta_0}(g_{\min(x_1, x_2),\min(t_1,t_2),D})$, we simply need to determine the minimum values and put it into $\mathbb{E}_{\theta_0}(g_{x_i,t_i,D})$.) As usual for practical applications, $\fth_0$ is replaced by $\hat{\fth}_n$ by use of the continuous mapping theorem. The commonly used Cholesky decomposition for random field generation \cite[see e.g.][]{liu2019} will be simplified when applied to simulate the distribution of $\Vert \mathbb{H}_{\mathbb{P}_{\hat{\fth}_n}} \Vert_{\mathcal{G}_D}$, because the support is compact.

\rule{\linewidth}{0.5pt}
\hypertarget{h8}{\textbf{Algorithm 2:}} Computing the critical value of \eqref{e10} \vspace{-3mm}\\
\rule{\linewidth}{0.5pt}
\begin{enumerate}
	\setlength{\itemsep}{-6pt}
	\item Create a grid on the rectangle $[0, G+s]\times [0,G]$.
	\item Calculate the covariance matrix for all grid points according to \eqref{e9}.
	\item Repeat for a specified number of iterations:
	\begin{enumerate}
		\setlength{\itemsep}{-6pt}
		\renewcommand{\labelenumii}{\roman{enumii})}
		\item Create a normally distributed vector with mean zero and the calculated covariance matrix.
		\item Find the maximum absolute value among the components of the vector.
	\end{enumerate}
	\item Select the appropriate quantile from the sorted maximum values. \vspace{-5mm}
\end{enumerate}
\rule{\linewidth}{0.5pt}



\section{Testing the exponential distribution}\label{secparmod}

The main applied incentive was to test whether the  lifespan $X$ has an exponential density $f_{\theta}^X(x)=\theta exp(- \theta x)$, as part of hypothesis \eqref{hypoth}. Typically in practice, we cannot assume that lifespan $X$ and the age at truncation $T$ are stochastically independent. Mainly for the ease of display, we assume $T$ to be uniformly distributed. 

We test with dependence, but as a reference we also include the test without dependence.  We verify the validity of the assumptions \ref{A1}-\ref{A5} for two specific models and derive the covariance function needed for the Algorithm \protect\hyperlink{h8}{2}.  

\subsection{Independent truncation: Product copula} \label{sec61}

The goodness-of-fit test for the hypotheses \eqref{hypoth} assuming independence between the lifetime and truncation variable, i.e. with product copula $C_{\Pi}(u,v):=u \cdot v$, is given by Theorem \ref{th1} and practically enabled with Algorithms 1 and 2. We need to verify Assumptions \ref{A1}-\ref{A5}. Note that the copula does not depend on a parameter, so that under hypothesis $F_{\theta}=F^X_{\theta} \cdot F^T$ with $\Theta=(\varepsilon, 1/\varepsilon)$. 

Assumption \ref{A1} results from \citet[][Eq. (1)]{weiswied2021} and $0 < G < \infty$, $s>0$, so that 
\begin{equation*}
	\alpha_{\theta}=\frac{(1-e^{- \theta s})(1-e^{-G \theta})}{G \theta} \in (0,1).
\end{equation*} 
The integrability and continuous differentiability of the density $f_{\theta}(x,t)=\theta e^{-\theta x}/G$ are consequences of the corresponding properties of the exponential function. The interchangeability of integrating for $(x,t)$ and differentiating for $\theta$ can be demonstrated through direct calculations. The $Z$-estimator is defined as the zero of the criterion function given in \citet[][Eq. (5)]{weiswied2021}, where the score function is defined as
\begin{equation}\label{e15}
	\psi_{\theta}(X_i,T_i)=\mathds{1}_D(X_i,T_i)\left[X_i-\frac{1}{\theta}+\frac{\dot{\alpha}_{\theta}}{\alpha_{\theta}}\right].
\end{equation}
\citet[][Theorems 1 and 2]{weiswied2021} show that Assumptions \ref{A3} and \ref{A4} hold. Note that the relation $\phi_{\theta_0}=-(\mathbb{E}_{\theta_0} [ \dot{\psi}_{\theta_0}])^{-1}\psi_{\theta_0}$ and Assumption \ref{A5} follows from \citet[][Lemmas 2 and 3]{weiswied2021}. Especially, it is $\sup_{(x,t) \in D} \vert x - \frac{1}{\theta_0}+\frac{\dot{\alpha}_{\theta_0}}{\alpha_{\theta_0}} \vert < \infty$ and $\dot{\mathbb{E}}_{\theta_0}[\psi_{\theta_0}]>0$ and thus $\sup_{(x,t) \in S} \vert \phi_{\theta_0}(x,t)\vert < \infty$. Therefore, Theorem \ref{th1} applies and due to the information matrix equality \cite[see][Eq. (11)]{topweis2025} and the relation $\psi_{\theta_0}=\dot{f}_{\theta_0}/f_{\theta_0}-\dot{\alpha}_{\theta_0}/\alpha_{\theta_0}$ on $D$, the covariance function simplifies to
\begin{align}\label{e51}
	\begin{split}
		&\mathbb{E}_{\theta_0}(g_{\min(x_1,x_2),\min(t_1,t_2),D})- \mathbb{E}_{\theta_0}(g_{x_1,t_1,D}) \mathbb{E}_{\theta_0}(g_{x_2,t_2,D})/\alpha_{\theta_0}\\
		&+(\mathbb{E}_{\theta_0} [ \dot{\psi}_{\theta_0}])^{-1} \cdot \mathbb{\dot{E}}_{\theta_0}(g_{x_1,t_1,D}) \mathbb{\dot{E}}_{\theta_0}(g_{x_2,t_2,D})\\
		&-(\mathbb{E}_{\theta_0} [ \dot{\psi}_{\theta_0}])^{-1} \cdot  \frac{\dot{\alpha}_{\theta_0}}{\alpha_{\theta_0}}\cdot \left[\mathbb{\dot{E}}_{\theta_0}(g_{x_2,t_2,D}) \mathbb{E}_{\theta_0}(g_{x_1,t_1,D})+\mathbb{\dot{E}}_{\theta_0}(g_{x_1,t_1,D}) \mathbb{E}_{\theta_0}(g_{x_2,t_2,D})\right]\\
		&+(\mathbb{E}_{\theta_0} [ \dot{\psi}_{\theta_0}])^{-1} \cdot  \frac{\dot{\alpha}_{\theta_0}^2}{\alpha_{\theta_0}^2}\cdot \mathbb{E}_{\theta_0}(g_{x_1,t_1,D})\mathbb{E}_{\theta_0}(g_{x_2,t_2,D}).
	\end{split}
\end{align}

The elements are given in Appendix \ref{App41}. In order to differentiate the symbols, we use the superscript $\Pi$ in $C$, $\hat{\theta}_n$ and $\alpha_{\hat{\theta}_n}$ as a signal of the involved product copula in Section \ref{emp_appl} of the application.   

\subsection{Dependent truncation: $\textup{FGM}$ copula} \label{sec62}

 \cite{topweis2025} apply the Gumbel-Barnett copula for modelling dependence between lifespan and truncation age, which is only suitable for human demography, as the direction of instationary is known to be ``upwards''. By contrast, for business demography, where the direction is not known, the Farlie-Gumbel-Morgenstern (FGM) copula, which models two-sided dependence, is more appropriate, especially for the data example in Section \ref{emp_appl}: 
\begin{equation*}
	C^{\textup{FGM}}_{\vth}(u,v):=u v  [1+\vth (1-u)(1-v)], \, \vth \in [-1,1]
\end{equation*}
In order to distinguish between the symbols in Section \ref{emp_appl} of the application, we already use the superscript $\textup{FGM}$ in this section. 
Kendall's tau is $\tau_{\vth}^{\textup{FGM}}=2\vth/9$, so that the codomain of the parameter $\vth$ enables positive $(\vth>0)$ and negative dependence $(\vth<0)$, as well as independence $(\vth=0)$. The density of $(X_i,T_i)^{\textup{T}}$ results in
\begin{equation*}
	f^{\textup{FGM}}_{\fth }(x,t)=
	\frac{\theta}{G}e^{-\theta x} \left[1+\vth (2e^{-\theta x} -1) \left(1-\frac{2t}{G}\right)\right],
\end{equation*}
for $x>0$ and $0<t<G$, and is zero elsewhere. Hence, the parameter space is two-dimensional with $\Theta=(\varepsilon_{\theta}, 1/\varepsilon_{\theta})\times(\varepsilon_{\vth}-1,1-\varepsilon_{\vth})$. For the goodness-of-fit test of the hypotheses \eqref{hypoth}, it is now $F_{\fth}=C^{\textup{FGM}}_{\vth}(F^X_{\theta},F^T)$. Again, we need to verify the Assumptions \ref{A1}-\ref{A5}, now for the FGM copula. 

For the Gumbel-Barnett copula, Assumptions \ref{A1}-\ref{A5} are proven in \citet[][Lemmas 1,3,4, Theorem 1,2]{topweis2025}. 
That is, in order to prove Asumption \ref{A1} for the FGM copula, the selection probability calculates as \cite[see][after Eq. (2)]{topweis2025}:
\begin{align*}
	\alpha_{\fth}^{\textup{FGM}}
	&=\frac{1}{\theta G} (1-e^{-\theta s})(1-e^{-\theta G})\\
	&\quad-\frac{\vth}{G}\left[-\frac{1}{\theta}(e^{-\theta s}-1)(e^{-\theta G}+1)+\frac{1}{2 \theta} (e^{-2\theta s}-1)(e^{-2\theta G}+1)\right]\\
	&\quad +\frac{2 \vth}{G^2} \left[\frac{1}{\theta^2} (1-e^{-\theta s})(1-e^{-\theta G})-\frac{1}{4 \theta^2} (1-e^{-2\theta s})(1-e^{-2\theta G})\right].
\end{align*}
and it can proven to be in $(0,1)$. 

In order to see Assumption \ref{A2}, note that the integrability and continuous differentiability of the density $f_{\fth}^{\textup{FGM}}$ are consequences of the corresponding properties of the exponential function. The interchangeability of integrating for $(x,t)$ and differentiating for $\theta$ can be demonstrated through straightforward calculations.

The consistency stated in Assumption \ref{A3} can be established using techniques analogous to those applied in the Gumbel–Barnett copula case in \citet[][Theorem 1]{topweis2025}, starting from the two-dimensional score function $\psi_{\fth}^{\textup{FGM}}$ with coordinates:
\begin{equation}\label{e19}
	\begin{split}
	\psi_{\fth,1}^{\textup{FGM}}(X_i,T_i)=\mathds{1}_{[T_i,T_i+s]}(X_i)\Bigg[\frac{1}{\theta}-X_i-\frac{2\vth X_i e^{-\theta X_i}\left(1-\frac{2T_i}{G}\right)}{1+\vth \left(2e^{-\theta X_i}-1\right)\left(1-\frac{2T_i}{G}\right)}
	-\frac{\dot{\alpha}_{\theta}^{\textup{FGM}}}{\alpha_{\fth}^{\textup{FGM}}}\Bigg] \\
	\psi_{\fth,2}^{\textup{FGM}}(X_i,T_i)=\mathds{1}_{[T_i,T_i+s]}(X_i)\left[\frac{\left(2 e^{-\theta X_i}-1\right)\left(1-\frac{2T_i}{G}\right)}{1+\vth \left(2e^{-\theta X_i}-1\right)\left(1-\frac{2T_i}{G}\right)}-\frac{\dot{\alpha}_{\vth}^{\textup{FGM}}}{\alpha_{\fth}^{\textup{FGM}}}\right]
		\end{split}
\end{equation}
 Note that the FGM copula has a simpler structure than the Gumbel-Barnett copula and simplifies calculations. Assumption \ref{A4}, the asymptotic linearity with the relation $\phi_{\fth_0}=-(\mathbb{E}_{\fth_0} [ \dot{\psi}_{\fth_0}])^{-1}\psi_{\fth_0}$, follows similarly as in \citet[][Theorem 2]{topweis2025}. On proving Assumption \ref{A5} see \citet[][Lemmas 3 and 4]{topweis2025}. Especially, for verifying $\sup_{(x,t)\in S}\Vert \phi_{\fth_0}(x,t)\Vert <\infty$, note that for the denominators in $\psi_{\fth_0,1}$ and $\psi_{\fth_0,2}$ it is $$\left\vert 1 + \vth_0 (2e^{-\theta_0 x}-1) \left(1-\frac{2t}{G}\right)\right\vert\geq \left\vert 1 - \left\vert \vth_0 (2e^{-\theta_0 x}-1) \left(1-\frac{2t}{G}\right)\right\vert \right\vert \geq 1- \vert \vth_0 \vert$$ on $D$ by the reverse triangle inequality. The edge case $\vert\vth_0\vert=1$ is not included in the parameter space. The two longer numerators in \eqref{e19} can be bounded above on $D$ by
 $\left\vert 2 \vth_0 x e^{-\theta_0 x} \left(1-\frac{2t}{G}\right) \right\vert \leq 2 \vert \vth_0\vert (G+s)$ and $\left\vert (2 e^{-\theta_0 x} -1) \left(1-\frac{2t}{G}\right) \right\vert \leq 1$.

Thus, Theorem \ref{th1} applies, and due to the information matrix equality \cite[see][Eq. (11)]{topweis2025} and the relation $\psi_{\fth_0}=\dot{f}_{\fth_0}/f_{\fth_0}-\dot{\alpha}_{\fth_0}/\alpha_{\fth_0}$ on $D$, we can simplify the covariance function to
\begin{align}\label{e18}
	\begin{split}
		&\mathbb{E}_{\fth_0}(g_{\min(x_1,x_2),\min(t_1,t_2),D})- \mathbb{E}_{\fth_0}(g_{x_1,t_1,D}) \mathbb{E}_{\fth_0}(g_{x_2,t_2,D})/\alpha_{\fth_0}\\
		&-\mathbb{\dot{E}}_{\fth_0}(g_{x_2,t_2,D})^{\textup{T}}\cdot (-\mathbb{E}_{\fth_0} [ \dot{\psi}_{\fth_0}])^{-1} \cdot \mathbb{\dot{E}}_{\fth_0}(g_{x_1,t_1,D}) \\
		&+\frac{\mathbb{E}_{\fth_0}(g_{x_1,t_1,D})}{\alpha_{\fth_0}} \cdot \mathbb{\dot{E}}_{\fth_0}(g_{x_2,t_2,D})^{\textup{T}}\cdot  (-\mathbb{E}_{\fth_0} [ \dot{\psi}_{\fth_0}])^{-1} \cdot  \dot{\alpha}_{\fth_0}\\
		&+\frac{\mathbb{E}_{\fth_0}(g_{x_2,t_2,D})}{\alpha_{\fth_0}} \cdot \mathbb{\dot{E}}_{\fth_0}(g_{x_1,t_1,D})^{\textup{T}}\cdot  (-\mathbb{E}_{\fth_0} [ \dot{\psi}_{\fth_0}])^{-1} \cdot  \dot{\alpha}_{\fth_0}\\
		&-\frac{\mathbb{E}_{\fth_0}(g_{x_1,t_1,D})}{\alpha_{\fth_0}}\frac{\mathbb{E}_{\fth_0}(g_{x_2,t_2,D})}{\alpha_{\fth_0}} \cdot \dot{\alpha}_{\fth_0}^{\textup{T}} \cdot  (-\mathbb{E}_{\fth_0} [ \dot{\psi}_{\fth_0}])^{-1} \cdot  \dot{\alpha}_{\fth_0}.
	\end{split}
\end{align}

The elements are given in Appendix \ref{App42}. The superscript $\textup{FGM}$ will indicate the FGM copula in Section \ref{emp_appl}.


\subsection{Example: Enterprise lifespans in Germany}\label{emp_appl}
We wanted to conduct a goodness-of-fit test to determine whether enterprise lifespans in Germany can be described by an exponential distribution.  

\subsubsection{Population and observed data} \label{obsdata}

The population comprises enterprises based in Germany and founded between January 1, 1990, and December 31, 2013 ($G = 24$). The age of the enterprise $i$ at the end of the foundation period $T_i$ is in $[0,24]$, the support of the uniform distribution. 
 The variable of interest, $X_i>0$, is the age of enterprise $i$ at closure. In the three years, from January 1, 2014, to December 31, 2016 ($s = 3$) enterprise closures are observed, together with their foundation dates, yielding a doubly truncated sample of $m_n = 55{,}279$ units. Unfortunately, observed enterprise closures are restricted to insolvencies. For simplicity, and in order to report the life expectancy of enterprises, we assume in a competing risk model for the enterprise history, that the intensity of closure due to insolvency is equal to the intensity due to any other reason. 


\subsubsection{Results for $\boldsymbol{\Pi}$ and $\textup{FGM}$ copula}

The resulting tests for the models in Sections \ref{sec61} and \ref{sec62}, decomposed into test statistic and critical value:

We compute the two test statistics with Algorithm \hyperlink{h4}{1}. For the observations ($m_n=55,279$), the projections onto the upper (15,268) and right edge (6,768), as well as the intersection points (5,971,513), are determined. A total of 6,048,849 evaluations are therefore required. For $F^{\textup{obs}}_{\hat{\boldsymbol{\theta}}_n}$ and $\alpha_{\hat{\boldsymbol{\theta}}_n}$, the point estimate are needed. 
Assuming the age of the enterprise to be independent of its foundation year, or, of its age at the time of study onset, is the situation in  \cite{weiswied2021} and results in  $\hat{\theta}_n^{\Pi}=0.08261$. (As a consequence, the observation probability is $\alpha_{\hat{\theta}_n}^{\Pi}=0.0955$.) When we assume that the dependence between the age of the company and its age at the study beginning is modelled by the FGM copula,  \cite{topweis2025} reported $\hat{\theta}^{\textup{FGM}}_n=0.08172$ and $\hat{\vth}^{\textup{FGM}}_n=0.10256$. Note that the positive dependence between $X_i$ and $T_i$ means a negative time trend for life expectancy. Kendall's tau seems small with $\tau_{\hat{\vth}_n}^{\textup{FGM}}=0.023$, but \cite{topweis2025} show significance. The observation probability is $\alpha_{\hat{\fth}_n}^{\textup{FGM}}=0.09753$. Auxiliary values in the Algorithm \hyperlink{h4}{1} are now given in Table \ref{table3}.

\vspace{-1em}
\begin{table}[H]
	\caption{Auxiliary results for KS statistic according to Algorithm \protect\hyperlink{h4}{1} for the independence $C^{\Pi}$ and FGM copula $C^{\textup{FGM}}_{\vth}$, $\times 10^2$.}	\label{table3} 
	\begin{center}
		\begin{tabular}{cccccc}
			\toprule
			&			$\delta^1$ & $\delta^2$ & $\delta^3$ 				 & $\delta^4$ 					& $\delta^5$ \\ \midrule[1.3pt]	
			$C^{\Pi}$ & 	$2.2725$  & $2.2733$  & $3.6173 \cdot 10^{-3}$ & $1.7721 \cdot 10^{-3}$ & $1.7728 \cdot 10^{-3}$\\
			\bottomrule[1.3pt]	
			$C^{\textup{FGM}}_{\vth}$ &		$2.3040$  & $2.3039$  & $3.6173 \cdot 10^{-3}$ & $1.7697 \cdot 10^{-3}$ & $1.7702 \cdot 10^{-3}$\\
			\bottomrule[1.3pt]			
		\end{tabular}
	\end{center}
\end{table}
\vspace{-1em}
Thereafter, the largest deviation occurs for the copulas in $\delta^1$ or $\delta^2$, that is, in the observations or the intersection points. By multiplying the maximum by $\sqrt{m_n}$ and $\sqrt{\alpha_{\hat{\theta}_n}^{\Pi}}$ or $\sqrt{\alpha_{\hat{\fth}_n}^{\textup{FGM}}}$, the test statistics can be calculated as 1.651 and 1.6917, respectively.

We continue with the calculation of the quantile under the null hypothesis, using the method described in Algorithm \hyperlink{h8}{2}. A grid width of $1/100=0.01$ was chosen, and the number of entries of the covariance matrix is already $(100G\cdot 100(G+s) )^2=4.199 \cdot 10^{13}$. The procedure was repeated 1,000 times, and for instance at level $99$\%, quantiles of 0.3558 for independence, and of 0.2914 for FGM-dependence, have been determined. Table  \ref{table4} contains quantiles at levels $90$\%, $95$\% and  $99$\%, and compares them with the test statistics. 
\begin{table}[H]
	\caption{Comparison of the test statistics \eqref{e10} (according to Algorithm \protect\hyperlink{h8}{1}) and critical values according to Algorithm \protect\hyperlink{h8}{2} (three nominal levels) for $\Pi$ copula (with the covariances \eqref{e51}) and for $\textup{FGM}$ copula (with the covariances \eqref{e18})}
	\label{table4}
		\begin{tabular}{c|ccc|ccc}
			
			\toprule
			Estimators$\backslash$ Level $\alpha$
			& 0.10 & 0.05 &0.01 & 0.10 & 0.05 &0.01 \\ \midrule[1.3pt]
			& \multicolumn{3}{c|}{$C^{\Pi}$: \eqref{e10} \, = \, 1.65} & \multicolumn{3}{c}{$C^{\textup{FGM}}_{\vth}$: \eqref{e10} \, = \, 1.69} \\
			$(\hat{\theta}_n,n)$  & $0.5719$  & $0.6558$ & $0.8298$ & $0.7320$  & $0.8510$ & $1.0816$ \\
			$(\hat{\theta}_n,\hat{n})$  & $0.2869$  & $0.3051$ & $0.3558$ & $0.2378$  & $0.2535$ & $0.2914$\\
			\bottomrule[1.3pt]	
		\end{tabular}
\end{table}	
The test rejects clearly at any of those levels. The rejected independence of  \cite{topweis2025}, suggest that the first test is of little use. The second test still does not strictly allow rejecting the exponential distribution of the lifespan, because the uniform distribution of the foundation date may also have lead to rejection. However,  \citet[][Lemma 2]{doerre2020} shows that the assumption of a homogeneous Poisson process as a model for foundation (with whatever parameter) results in a uniform distribution of $T$, so that the exponential distribution assumption for $X$ is most likely the (or one) reason for rejection. (Table \ref{table4} also lists quantiles for $(\hat{\fth}_n, n)$ where it is itself interesting that the quantiles can be given in  abundance of the estimation of $n$. Of course, the test statistics remain the same, but the quantiles more than double, so that the knowledge of $n$ appears to do more harm than good.)

\section{Conclusion} \label{secdiscussion}

Theoretically, it is interesting to note that the only additional assumption above of a a parametric analysis  \cite[see][]{weiswied2021,topweis2025} is the first requirement in Assumption \ref{A5}, the finiteness for $\phi_{\boldsymbol{\theta}_0}$. The function $\phi_{\boldsymbol{\theta}_0}$ results from both the population model and the sampling design, so that it is not easy to assess whether the restriction is notable. A straightforward suggestion of a larger population model is from  \cite{weisdoer2022} who present evidence, but only under the assumption of truncation independence, that foundation does not evolve perfectly uniform. Hence, the choice of the marginal distribution of $T$ should be modelled.
 
Extensions of the sampling design may include, for instance, additional censoring, which is occasionally associated with truncation \cite[see e.g.][]{honore2012,siegtopweis2025}. An implication is that the population definition must be extended to cover the study period.  When we think of the design as an approximation of realistically observational data, there is another options to ours, namely to follow \cite{And0} and consider the data as a truncated sample \cite[and denote it as ``clockwise'' design as in][Figure 2]{topweis2025}. \cite{emura2017} and \cite{doerreemura2021} view the data as a sample of a truncated population (the ``anti-clockwise'' design). There are substantial implications: (i) The number of observations is random in the clockwise design and fixed in the anti-clockwise. (ii) In the clockwise design, forecasting is possible for those population units that are alive at the end of the study period. For the anti-clockwise design, the population ends with the study period,  so that forecasting is not possible. (iii) The independence between units, necessary to write the likelihood as a product, is more plausible for the clockwise design, where lives are temporally widely separated. Without much evidence being given here, we found that, at least in the special case of left-truncation and known $n$, our goodness-of-fit test in the clockwise sense, which is, under $H_0$, based on a Poisson approximation of the point process, behaves similarly to the goodness-of-fit test in the anti-clockwise design of \cite{emura2012}. 

In our large-sample application, the test statistic is largely separated from the critical value. For small-sample cases with a less marked test result, actual level and power could be of interest and assessed with detailed simulations. The partial derivatives of copulas used here are given in closed form, so that the conditional inverse method simplifies such simulations. 

Note also that Theorem \ref{th1} is valid for other global measures of discrepancy than the Kolmogorov-Smirnov concept, such as $\int \mathbb{G}_n(g)^2\dt \mathbb{P}_{\fth_0}$, which leads to the Cramér-von-Mises statistic. However, especially the computation of the test statistic must be adapted. Another computational aspect arises from the fact that we used in Algorithm 2, with the Cholesky decomposition, a very basic method. It has cubic complexity and is hence relatively slow, whereas \cite{liu2019} present an overview of several other methods with lower complexity, partly at the expense of accuracy. 

\vspace*{0.3cm}
\textbf{Acknowledgment}:
We thank Dr. Claudia Gregor-Lawrenz (Bundesanstalt für Finanzdienstleistungsaufsicht (BaFin), Bonn) for her continuous support on the topic, e.g. by contributing earlier data in the truncation design. We thank Dr. Wolfram Lohse (Görg, Hamburg) for indication to the data source and Daniel Ollrogge (CRIF Bürgel, Hamburg) for supply of the data. 
For helpful comments we thank the participants of the presentations at the sessions ``Survival analysis: Truncated data'' (CFE-CMStatistis, Dec. 2024, London, UK), ``Empirical Economics and Applied Econometrics 2'' (DAGStat, Mar. 2025, Berlin, Germany) and ``Copula Models in Econometrics and other Applications'' (EcoStat, Aug. 2025, Tokyo, Japan). The linguistic and idiomatic advice of Brian Bloch is also gratefully acknowledged.

 

\appendix
\renewcommand{\theequation}{\thesection.\arabic{equation}}
\numberwithin{equation}{section}
\newpage 
\section{Proof of Corollary \ref{cor1}, Lemma \ref{l2} and \ref{l1}}\label{App1}
\subsection{Proof of Corollary \ref{cor1}}\label{App11}
Since the set of functions $\mathcal{G}_D$ is limited to those of the form $\mathds{1}_{[0,x]\times[0,t]}\mathds{1}_D$ for $(x,t) \in S$, the Fréchet derivative can be specified explicitly. Using the density function $f_{\fth}$, the mapping $g_{x,t,D} \in \mathcal{G}_D \mapsto \mathbb{E}_{\fth_0}(g_{x,t,D}) \in \mathbb{R}$ can be written as
\begin{equation*}
	\mathbb{E}_{\fth_0}(g_{x,t,D})=\int_{[0,x]\times[0,t]\cap D} f_{\fth_0}(\tilde{x}, \tilde{t}) \dt (\tilde{x}, \tilde{t}).
\end{equation*}
According to \ref{A2}, the derivatives of $f_{\fth}$ with respect to $\theta$ and $\vth$ are bounded, i.e. $\Vert \dot{f}_{\fth}(x,t)\Vert \leq h_{\fth}$ for all $(x,t) \in S$ and $\fth \in \Theta$. Pursuant to \citet[][Lemma 16.2]{bauer2001}, the mapping $\fth \mapsto \mathbb{E}_{\fth}(g_{x,t,D})$ is continuously differentiable at $\fth_0$ given a function $g_{x,t,D} \in \mathcal{G}_D$. The (total) derivative at $\fth_0$ is given by
\begin{equation*}
	\dot{\mathbb{E}}_{\fth_0}(g_{x,t,D})=\int_{[0,x]\times[0,t]\cap D} \dot{f}_{\fth_0}(\tilde{x},\tilde{t}) \dt (\tilde{x}, \tilde{t}).
\end{equation*}
Fréchet differentiability follows from the differentiability of $f_{\fth}$ at $\fth_0$ and Scheffé's lemma, because
\begin{align*}
	&\lim_{\Vert h \Vert\to 0} \sup_{(x,t) \in S} \left\vert \int_{[0,x]\times[0,t]\cap D} \frac{f_{\fth_0+h}(\tilde{x}, \tilde{t})-f_{\fth_0}(\tilde{x}, \tilde{t})- h^{\textup{T}}\dot{f}_{\fth_0}(\tilde{x},\tilde{t})}{\Vert h\Vert} \dt (\tilde{x}, \tilde{t})\right\vert\\
	&\leq \lim_{\Vert h \Vert \to 0}   \int_{D} \left\vert\frac{f_{\fth_0+h}(\tilde{x}, \tilde{t})-f_{\fth_0}(\tilde{x}, \tilde{t})- h^{\textup{T}}\dot{f}_{\fth_0}(\tilde{x},\tilde{t})}{\Vert h\Vert}  \right\vert\dt (\tilde{x}, \tilde{t})=0.
\end{align*}
Consequently, $\dot{\mathbb{E}}_{\fth_0}$ is the Fréchet derivative of $\fth\in  [\varepsilon_{\theta}, 1/\varepsilon_{\theta}] \mapsto \mathbb{E}_{\fth} \in \ell^{\infty}(\mathcal{G}_D)$ at $\fth_0$. \qed

\subsection{Proof of Lemma \ref{l2}}\label{App12}
	First, note that according to \citet[][Lemma 16.2]{bauer2001}, it follows from \ref{A2} that $\fth \mapsto \alpha_{\fth}$ is continuously differentiable at $\fth_0$. Because of $\alpha_{\fth} \in (0,1)$ from \ref{A1}, the map $\fth \mapsto \sqrt{\alpha_{\fth}}$ is continuously differentiable at $\fth_0$ too. Hence, it is 
	\begin{equation*}
		\sqrt{\alpha_{\fth}}-\sqrt{\alpha_{\fth_0}}-(\fth-\fth_0)^{\textup{T}} \cdot \frac{1}{2\sqrt{\alpha_{\fth_0}}} \dot{\alpha}_{\fth_0}= o(\Vert \fth -\fth_0\Vert).
	\end{equation*}
	With the consistency of Assumption \ref{A3} and \citet[][Lemma 2.12]{vaart1998} the equation
	\begin{equation*}
		\sqrt{\alpha_{\hat{\fth}_n}}-\sqrt{\alpha_{\fth_0}}=(\hat{\fth}_n-\fth_0)^{\textup{T}} \cdot \frac{\dot{\alpha}_{\fth_0}}{2\sqrt{\alpha_{\fth_0}}} + o_{\mathbb{P}_{\fth_0}}(\Vert \hat{\fth}_n -\fth_0\Vert)
	\end{equation*}
	becomes evident also. The representation \eqref{e3} follows from the asymptotic linearity \ref{A4} with \ref{A5} of the sequence $\sqrt{n}(\hat{\fth}_n-\fth_0)$. Thereby, note that $o_{\mathbb{P}_{\fth_0}}(1)+o_{\mathbb{P}_{\fth_0}}(\Vert \hat{\fth}_n -\fth_0\Vert)=o_{\mathbb{P}_{\fth_0}}(1)$. \qed

\subsection{Proof of Lemma \ref{l1}}\label{App13}
	The convergence can be shown in two steps:
	\begin{equation}\label{e6}
		\left\Vert \frac{\sqrt{n}}{\sqrt{M_n}} \mathbb{P}_n(g_{x,t,D})- \frac{\mathbb{E}_{\fth_0}(g_{x,t,D})}{\sqrt{\alpha_{\fth_0}}} \right\Vert_{\mathcal{G}_D} + \left\Vert \frac{\mathbb{E}_{\hat{\fth}_n}(g_{x,t,D})}{\sqrt{\alpha_{\hat{\fth}_n}}} - \frac{\mathbb{E}_{\fth_0}(g_{x,t,D})}{\sqrt{\alpha_{\fth_0}}} \right\Vert_{\mathcal{G}_D}
	\end{equation}
	Adding $\pm 1/\sqrt{\alpha_{\fth_0}}$ in the first norm results in
	\begin{align*}
		&\left\Vert \left(\frac{\sqrt{n}}{\sqrt{M_n}}\pm \frac{1}{\sqrt{\alpha_{\fth_0}}}\right) \mathbb{P}_n(g_{x,t,D})- \frac{\mathbb{E}_{\fth_0}(g_{x,t,D})}{\sqrt{\alpha_{\fth_0}}} \right\Vert_{\mathcal{G}_D}\\
		&\leq \left\vert\frac{\sqrt{n}}{\sqrt{M_n}}- \frac{1}{\sqrt{\alpha_{\fth_0}}}\right\vert \cdot \left\Vert \mathbb{P}_n(g_{x,t,D}) \right\Vert_{\mathcal{G}_D}
		+ \frac{1}{\sqrt{\alpha_{\fth_0}}} \left\Vert \mathbb{P}_n(g_{x,t,D})- \mathbb{E}_{\fth_0}(g_{x,t,D}) \right\Vert_{\mathcal{G}_D}.
	\end{align*}
	From the LLN, we get
	\begin{equation*}
		\left\vert\frac{\sqrt{n}}{\sqrt{M_n}}- \frac{1}{\sqrt{\alpha_{\fth_0}}}\right\vert \overset{P} \longrightarrow 0 
	\end{equation*}
	and
	\begin{equation*}
		\left\Vert \mathbb{P}_n(g_{x,t,D}) \right\Vert_{\mathcal{G}_D}=\sup_{(x,t) \in D} \left\vert \frac{1}{n}\sum_{i=1}^n \mathds{1}_{\{(X_i,T_i)^\text{T} \in [0,x]\times[0,t]\cap D\}}\right\vert = \frac{M_n}{n} \overset{P} \longrightarrow  \alpha_{\fth_0}.
	\end{equation*}
	As demonstrated in Corollary \ref{cor2}, the class $\mathcal{G}_D$ exhibits the characteristics of a Glivenko-Cantelli and a Donsker class. Hence, it can be deduced that
	\begin{equation*}
		\left\Vert \mathbb{P}_n(g_{x,t,D})- \mathbb{E}_{\fth_0}(g_{x,t,D}) \right\Vert_{\mathcal{G}_D} \overset{a.s.} \longrightarrow 0.
	\end{equation*}
	The subsequent step involves the analysis of the second norm in \eqref{e6}. Assuming the validity of Assumption \ref{A2} and due to Corollary \ref{cor1}, the map $\fth \in \Theta \mapsto \mathbb{E}_{\fth} \in \ell^{\infty}(\mathcal{G}_D)$ is Fréchet differentiable at $\fth_0$ and consequently continuous in $\fth_0$, i.e.
	\begin{equation*}
		\Vert \mathbb{E}_{\fth}-\mathbb{E}_{\fth_0}\Vert_{\mathcal{G}_D} \to 0, \quad \text{as } \fth \to \fth_0. 
	\end{equation*}
	Furthermore, the map $\fth \mapsto 1/\sqrt{\alpha_{\fth}}$ is continuous in $\fth_0$ and interpreted as a map to $\ell^{\infty}(\mathcal{G}_D)$, it is constant. Hence, the composition $\fth \mapsto \mathbb{E}_{\fth}/\sqrt{\alpha_{\fth}}$ is continuous in $\fth_0$ also. The consistency $\hat{\fth}_n \to \fth_0$ in Assumption \ref{A3} and continuous mapping theorem \cite[][Theorem 18.11]{vaart1998}, together ensure that the second norm in \eqref{e6} converges to zero in probability as $n \to \infty$. \qed

\section{Hadamard differentiability}\label{App2}
Given that a function $\phi_{\Delta}$ is continuously differentiable in the common sense, the Hadamard differentiability for certain spaces is a direct consequence \cite[see][Problem 20.6]{vaart1998}.

\begin{lemma}\label{l4}\textbf{}\\
	Let $\phi_{\Delta}:[0,1] \to \mathbb{R}$ be a continuously differentiable function. Then, the map $z \mapsto \phi_{\Delta} \circ z$, whose domain is the set of functions $z:\mathcal{G} \to [0,1]$ contained in $\ell^{\infty}(\mathcal{G})$, is Hadamard differentiable.
\end{lemma}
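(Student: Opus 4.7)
The plan is to verify Hadamard differentiability at an arbitrary $z_0:\mathcal{G}\to[0,1]$ in $\ell^{\infty}(\mathcal{G})$, with candidate derivative the bounded linear map $\phi'_{\Delta}(z_0):\ell^{\infty}(\mathcal{G})\to\ell^{\infty}(\mathcal{G})$ defined pointwise by $(\phi'_{\Delta}(z_0)h)(g):=\phi'_{\Delta}(z_0(g))\,h(g)$. Linearity in $h$ is immediate. Continuity follows because the derivative $\phi'_{\Delta}$ is continuous on the compact interval $[0,1]$ and is therefore bounded by some constant $M$, giving $\Vert\phi'_{\Delta}(z_0)h\Vert_{\mathcal{G}}\le M\Vert h\Vert_{\mathcal{G}}$ for every $h\in\ell^{\infty}(\mathcal{G})$.

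Next, fix sequences $t_n\to 0$ in $\mathbb{R}\setminus\{0\}$ and $h_n\to h$ in $\ell^{\infty}(\mathcal{G})$ such that $z_0+t_n h_n$ takes values in $[0,1]$ for all large $n$, so that the composition is defined. For each $g\in\mathcal{G}$, the classical one-dimensional mean value theorem yields some $\xi_{n,g}$ between $z_0(g)$ and $z_0(g)+t_n h_n(g)$ with
$$\frac{\phi_{\Delta}(z_0(g)+t_n h_n(g))-\phi_{\Delta}(z_0(g))}{t_n}=h_n(g)\,\phi'_{\Delta}(\xi_{n,g}).$$
Subtracting $\phi'_{\Delta}(z_0(g))h(g)$ and inserting $\pm h_n(g)\phi'_{\Delta}(z_0(g))$ gives the decomposition
$$h_n(g)\bigl(\phi'_{\Delta}(\xi_{n,g})-\phi'_{\Delta}(z_0(g))\bigr)+\bigl(h_n(g)-h(g)\bigr)\phi'_{\Delta}(z_0(g)).$$

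To complete the proof I would take the supremum over $g\in\mathcal{G}$ of each summand. The second summand is bounded by $M\Vert h_n-h\Vert_{\mathcal{G}}\to 0$. For the first summand, note that $\Vert h_n\Vert_{\mathcal{G}}$ is bounded because $h_n\to h$, and $|\xi_{n,g}-z_0(g)|\le|t_n|\Vert h_n\Vert_{\mathcal{G}}\to 0$ uniformly in $g$; since $\phi'_{\Delta}$ is uniformly continuous on the compact interval $[0,1]$, the quantity $\sup_{g}|\phi'_{\Delta}(\xi_{n,g})-\phi'_{\Delta}(z_0(g))|$ vanishes, so the first summand also tends to zero in $\ell^{\infty}(\mathcal{G})$. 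This is exactly the Hadamard differentiability condition with derivative $\phi'_{\Delta}(z_0)$.

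The argument is essentially routine, with no genuine obstacle. The one point that needs care is the uniform-in-$g$ lift of the pointwise mean value theorem, and this is precisely where the hypothesis pays off: continuity of $\phi'_{\Delta}$ on the compact set $[0,1]$ upgrades to uniform continuity, which is what converts the pointwise differentiability into convergence in the sup-norm. Absent compactness of the target interval, a local boundedness or Lipschitz hypothesis would be required instead.
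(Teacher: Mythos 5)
Your proof is correct. The paper does not actually prove this lemma but delegates it to the citation of Problem 20.6 in van der Vaart (1998), and your argument --- pointwise mean value theorem plus uniform continuity of $\phi'_{\Delta}$ on the compact interval $[0,1]$ to upgrade to sup-norm convergence --- is exactly the standard solution to that exercise, so it supplies the omitted detail in the same spirit.
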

\begin{lemma}\label{l3}\textbf{}\\
	Let $\phi^1_{\Delta}:[0,1] \to \mathbb{R}$ be a continuously differentiable function with derivative $(\phi^1_{\Delta})'$. Further denote $\phi_{\Delta}(y_1,y_2)=(y_1, \phi^1_{\Delta}(y_2))$ for $y_1, \,y_2 \in [0,1]$. Let the space $(\ell^{\infty}(\mathcal{G})^2, \Vert \cdot \Vert_{\textup{max}})$ be equipped with the maximum norm. Then, the mapping $z=(z_1,z_2) \mapsto \phi_{\Delta} \circ z$ for functions $z_1, \, z_2:\mathcal{G} \to [0,1]$ from $\ell^{\infty}(\mathcal{G})$ is Hadamard differentiable with 
	$$\phi'_{\Delta}(z_1,z_2)=\begin{pmatrix}
		1 & 0 \\
		0 & (\phi^1_{\Delta})'(z_2)
	\end{pmatrix}.$$
\end{lemma}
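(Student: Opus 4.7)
The plan is to exploit the product structure of $\phi_{\Delta}$, which acts coordinate-wise: the identity on the first component and $\phi^1_{\Delta}$ on the second. Since the target space is equipped with the maximum norm, Hadamard differentiability will decouple into a trivial statement about the identity map together with a direct application of Lemma \ref{l4} to the second coordinate.

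First I would recall the definition: to prove Hadamard differentiability of $z \mapsto \phi_{\Delta} \circ z$ at $z=(z_1,z_2)$, I need to exhibit a continuous linear map $\phi'_{\Delta}(z_1,z_2)$ such that for every sequence $t_n \to 0$ and every sequence $h_n = (h_{n,1},h_{n,2}) \to h=(h_1,h_2)$ in $\ell^{\infty}(\mathcal{G})^2$ (with the max norm) one has
\begin{equation*}
\left\Vert \frac{\phi_{\Delta}\circ(z+t_n h_n) - \phi_{\Delta}\circ z}{t_n} - \phi'_{\Delta}(z_1,z_2)\,h \right\Vert_{\textup{max}} \longrightarrow 0.
\end{equation*}
The candidate derivative, read off the matrix in the statement, is the linear operator $h\mapsto (h_1,\,(\phi^1_{\Delta})'(z_2)\cdot h_2)$, where the product in the second slot is pointwise multiplication of functions on $\mathcal{G}$. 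Continuity and linearity of this operator are immediate, using that $(\phi^1_{\Delta})'(z_2)$ is uniformly bounded because $\phi^1_{\Delta}$ is continuously differentiable on the compact interval $[0,1]$.

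Next I would split the difference quotient into its two coordinates. By the max-norm definition,
\begin{equation*}
\left\Vert \frac{\phi_{\Delta}\circ(z+t_n h_n) - \phi_{\Delta}\circ z}{t_n} - \phi'_{\Delta}(z_1,z_2)\,h \right\Vert_{\textup{max}} = \max\{A_n,\,B_n\},
\end{equation*}
where
\begin{equation*}
A_n := \left\Vert \frac{(z_1+t_n h_{n,1}) - z_1}{t_n} - h_1 \right\Vert_{\infty} = \Vert h_{n,1}-h_1\Vert_{\infty},
\end{equation*}
which tends to $0$ since $h_{n,1}\to h_1$ in $\ell^{\infty}(\mathcal{G})$, and
\begin{equation*}
B_n := \left\Vert \frac{\phi^1_{\Delta}\circ(z_2+t_n h_{n,2}) - \phi^1_{\Delta}\circ z_2}{t_n} - (\phi^1_{\Delta})'(z_2)\cdot h_2 \right\Vert_{\infty}.
\end{equation*}
The term $B_n$ vanishes asymptotically precisely by Lemma \ref{l4} applied to $\phi^1_{\Delta}$. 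Since both $A_n$ and $B_n$ go to $0$, so does their maximum, which completes the verification.

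There is essentially no serious obstacle here; the only point worth flagging is to check that, at least for all sufficiently large $n$, the perturbed functions $z_2 + t_n h_{n,2}$ still take values in $[0,1]$ so that Lemma \ref{l4} is genuinely applicable (this is the implicit domain condition in that lemma). For the application in Theorem \ref{th1} this is not an issue because the relevant function is the square root on $[0,1]$, which extends continuously differentiably to a neighbourhood on $(0,1]$ whenever $\alpha_{\fth_0}>0$ by \ref{A1}; more generally one views $\phi^1_{\Delta}$ as extended to an open neighbourhood of the range of $z_2$, and then the difference quotient statement is exactly the standard one-dimensional Hadamard derivative computation via the mean-value theorem and uniform continuity of $(\phi^1_{\Delta})'$ on a compact set.
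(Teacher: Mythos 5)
Your proposal is correct and follows essentially the same route as the paper's proof: both decompose the max-norm of the difference quotient into its two coordinates, note that the first reduces to $\Vert h_{n,1}-h_1\Vert_{\infty}\to 0$, and dispose of the second by invoking Lemma \ref{l4}. Your additional remark about ensuring $z_2+t_n h_{n,2}$ remains in the domain of $\phi^1_{\Delta}$ is a reasonable point of care that the paper leaves implicit, but it does not change the argument.
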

\begin{proof}
	Fix $z=(z_1,z_2) \in \ell^{\infty}(\mathcal{G})^2$. For $t \to 0$ and $h_t=(h_t^1,h_t^2) \to h=(h^1,h^2)$, according to Lemma \ref{l4}, it is
		\begin{align*}
			&\left\Vert \frac{\phi_{\Delta}(z+t h_t)-\phi_{\Delta}(z)}{t}-\phi_{\Delta}'(z)h\right\Vert_{\textup{max}}\\
			&\quad=\left\Vert
			\begin{pmatrix}
				(z_1+th_t^1-z_1)/t \\
				(\phi^1_{\Delta}(z_2+th_t^2)-\phi^1_{\Delta}(z_2))/t
			\end{pmatrix}
			-
			\begin{pmatrix}
				h^1\\
				(\phi^1_{\Delta})'(z_2)h^2
			\end{pmatrix}
			\right\Vert_{\textup{max}}\\
			&\quad=\max\left\{\vert h_t^1-h^1\vert,\left\vert (\phi^1_{\Delta}(z_2+th_t^2)-\phi^1_{\Delta}(z_2))/t-(\phi^1_{\Delta})'(z_2)h^2\right\vert\right\} \to 0.
	\end{align*}\qed
\end{proof}

\section{General form of covariance function}\label{App3}

The calculation of the covariance from equation \eqref{e9} is presented below. Define $g_i:=g_{x_i,t_i,D}=\mathds{1}_{[0,x_i]\times[0,t_i]\cap D} \in \mathcal{G}_D$ for $i=1,\,2$. In the first step, we insert the definition of the process $\mathbb{H}_{\mathbb{P}_{\fth_0}}$ from Theorem \ref{th1} into $\textup{Cov}[\mathbb{H}_{\mathbb{P}_{\fth_0}}g_1,\mathbb{H}_{\mathbb{P}_{\fth_0}}g_2]$, i.e.
\begin{align*}
	\mathbb{H}_{\mathbb{P}_{\fth_0}}g_i&=\mathbb{B}_{\mathbb{P}_{\fth_0}} (g_i)-\left(\mathbb{B}_{\mathbb{P}_{\fth_0}}(\phi_{\fth_0,1}),\mathbb{B}_{\mathbb{P}_{\fth_0}}(\phi_{\fth_0,2})\right) \cdot
	\begin{pmatrix}
		\dot{\mathbb{E}}_{\theta_0}(g_i)\\
		\dot{\mathbb{E}}_{\vth_0}(g_i)
	\end{pmatrix}\\
	&\quad +\left[\left(\mathbb{B}_{\mathbb{P}_{\fth_0}}( \phi_{\fth_0,1}), \mathbb{B}_{\mathbb{P}_{\fth_0}}( \phi_{\fth_0,2})\right)  \cdot
	\begin{pmatrix}
		\dot{\alpha}_{\theta_0}\\
		\dot{\alpha}_{\vth_0}
	\end{pmatrix}
	- \mathbb{B}_{\mathbb{P}_{\fth_0}}(\mathds{1}_D) \right]\cdot \frac{\mathbb{E}_{\fth_0}(g_i)}{\alpha_{\fth_0}}\\
	&=\mathbb{B}_{\mathbb{P}_{\fth_0}} (g_i)+ \left(\mathbb{B}_{\mathbb{P}_{\fth_0}}(\phi_{\fth_0,1}),\mathbb{B}_{\mathbb{P}_{\fth_0}}(\phi_{\fth_0,2})\right) \cdot \left[\frac{\mathbb{E}_{\fth_0}(g_i)}{\alpha_{\fth_0}}
	\begin{pmatrix}
		\dot{\alpha}_{\theta_0}\\
		\dot{\alpha}_{\vth_0}
	\end{pmatrix}
	-
	\begin{pmatrix}
		\dot{\mathbb{E}}_{\theta_0}(g_i)\\
		\dot{\mathbb{E}}_{\vth_0}(g_i)
	\end{pmatrix}\right]\\
	&\quad-\mathbb{B}_{\mathbb{P}_{\fth_0}}(\mathds{1}_D) \cdot \frac{\mathbb{E}_{\fth_0}(g_i)}{\alpha_{\fth_0}}.
\end{align*}
To enhance readability, define further $K_{i,j}:=\left(\mathbb{E}_{\fth_0}g_i\frac{\dot{\alpha}_{j}}{\alpha_{\fth_0}}-\dot{\mathbb{E}}_{j}g_i\right)$, for $i=1,2$ and $j \in \{\theta_0, \vth_0\}$.
Thus, the covariance is
\begin{align*}
	&\textup{Cov}\Bigg[\mathbb{B}_{\mathbb{P}_{\fth_0}} g_1+\mathbb{B}_{\mathbb{P}_{\fth_0}}\phi_{\fth_0,1}\cdot K_{1,\theta_0}+\mathbb{B}_{\mathbb{P}_{\fth_0}}\phi_{\fth_0,2} \cdot K_{1,\vth_0}-\mathbb{B}_{\mathbb{P}_{\fth_0}}\mathds{1}_D\cdot \frac{\mathbb{E}_{\fth_0}g_1}{\alpha_{\fth_0}},\\
	& \hspace{28mm}\mathbb{B}_{\mathbb{P}_{\fth_0}} g_2+\mathbb{B}_{\mathbb{P}_{\fth_0}}\phi_{\fth_0,1} \cdot K_{2, \theta_0}+\mathbb{B}_{\mathbb{P}_{\fth_0}}\phi_{\fth_0,2}\cdot K_{2,\vth_0}-\mathbb{B}_{\mathbb{P}_{\fth_0}}\mathds{1}_D \cdot\frac{\mathbb{E}_{\fth_0}g_2}{\alpha_{\fth_0}}\Bigg].
\end{align*}
The calculation of the covariance is performed sequentially by fixing one summand at a time and combining it with all opposing summands.\\
First summand:
\begin{align*}
	&\textup{Cov}\left[\mathbb{B}_{\mathbb{P}_{\fth_0}} g_1, \mathbb{B}_{\mathbb{P}_{\fth_0}} g_2\right]+K_{2,\theta_0} \cdot \textup{Cov}\left[\mathbb{B}_{\mathbb{P}_{\fth_0}} g_1, \mathbb{B}_{\mathbb{P}_{\fth_0}} \phi_{\fth_0,1} \right]\\
	&\quad+K_{2,\vth_0} \cdot \textup{Cov}\left[\mathbb{B}_{\mathbb{P}_{\fth_0}} g_1, \mathbb{B}_{\mathbb{P}_{\fth_0}} \phi_{\fth_0,2} \right]
	-\frac{\mathbb{E}_{\fth_0}g_2}{\alpha_{\fth_0}}\cdot \textup{Cov}\left[\mathbb{B}_{\mathbb{P}_{\fth_0}} g_1, \mathbb{B}_{\mathbb{P}_{\fth_0}}\mathds{1}_D \right]
\end{align*}
Second summand:
\begin{align*}
	&K_{1,\theta_0}\cdot \textup{Cov}\left[\mathbb{B}_{\mathbb{P}_{\fth_0}} g_2, \mathbb{B}_{\mathbb{P}_{\fth_0}} \phi_{\fth_0,1} \right]
	+K_{1,\theta_0}K_{2,\theta_0}\cdot\textup{Cov}\left[\mathbb{B}_{\mathbb{P}_{\fth_0}}  \phi_{\fth_0,1}, \mathbb{B}_{\mathbb{P}_{\fth_0}} \phi_{\fth_0,1} \right]\\
	&\quad +K_{1,\theta_0}K_{2,\vth_0}\cdot\textup{Cov}\left[\mathbb{B}_{\mathbb{P}_{\fth_0}}  \phi_{\fth_0,1}, \mathbb{B}_{\mathbb{P}_{\fth_0}} \phi_{\fth_0,2} \right]
	-\frac{\mathbb{E}_{\fth_0}g_2}{\alpha_{\fth_0}}K_{1,\theta_0}\cdot  \textup{Cov}\left[\mathbb{B}_{\mathbb{P}_{\fth_0}}  \phi_{\fth_0,1}, \mathbb{B}_{\mathbb{P}_{\fth_0}} \mathds{1}_D  \right]
\end{align*}
Third summand:
\begin{align*}
	&K_{1,\vth_0}\cdot \textup{Cov}\left[\mathbb{B}_{\mathbb{P}_{\fth_0}} g_2, \mathbb{B}_{\mathbb{P}_{\fth_0}} \phi_{\fth_0,2} \right]
	+K_{1,\vth_0}K_{2,\theta_0}\cdot\textup{Cov}\left[\mathbb{B}_{\mathbb{P}_{\fth_0}}  \phi_{\fth_0,1}, \mathbb{B}_{\mathbb{P}_{\fth_0}} \phi_{\fth_0,2} \right]\\
	&\quad +K_{1,\vth_0}K_{2,\vth_0}\cdot\textup{Cov}\left[\mathbb{B}_{\mathbb{P}_{\fth_0}}  \phi_{\fth_0,2}, \mathbb{B}_{\mathbb{P}_{\fth_0}} \phi_{\fth_0,2} \right]
	-\frac{\mathbb{E}_{\fth_0}g_2}{\alpha_{\fth_0}}K_{1,\vth_0}\cdot  \textup{Cov}\left[\mathbb{B}_{\mathbb{P}_{\fth_0}}  \phi_{\fth_0,2}, \mathbb{B}_{\mathbb{P}_{\fth_0}} \mathds{1}_D  \right]
\end{align*}
Fourth summand:
\begin{align*}
	&-\frac{\mathbb{E}_{\fth_0}g_1}{\alpha_{\fth_0}}\cdot \textup{Cov}\left[\mathbb{B}_{\mathbb{P}_{\fth_0}} g_2, \mathbb{B}_{\mathbb{P}_{\fth_0}}\mathds{1}_D \right]-\frac{\mathbb{E}_{\fth_0}g_1}{\alpha_{\fth_0}}K_{2,\theta_0}\cdot \textup{Cov}\left[\mathbb{B}_{\mathbb{P}_{\fth_0}} \phi_{\fth_0,1}, \mathbb{B}_{\mathbb{P}_{\fth_0}}\mathds{1}_D \right]\\
	&-\frac{\mathbb{E}_{\fth_0}g_1}{\alpha_{\fth_0}}K_{2,\vth_0}\cdot \textup{Cov}\left[\mathbb{B}_{\mathbb{P}_{\fth_0}} \phi_{\fth_0,2}, \mathbb{B}_{\mathbb{P}_{\fth_0}}\mathds{1}_D \right]
	+\frac{\mathbb{E}_{\fth_0}g_1\cdot \mathbb{E}_{\fth_0}g_2}{\alpha_{\fth_0}^2}\cdot \textup{Cov}\left[ \mathbb{B}_{\mathbb{P}_{\fth_0}}\mathds{1}_D, \mathbb{B}_{\mathbb{P}_{\fth_0}}\mathds{1}_D \right]
\end{align*}
The individual covariances contained therein can be simplified as follows, using the covariance of the $\mathbb{P}_{\fth_0}$-Brownian bridge from \eqref{e14} with $i,j =1,2$:
\begin{center}
	\setlength{\tabcolsep}{2pt}
	\begin{tabular}{lll}
		$\textup{Cov}\left[\mathbb{B}_{\mathbb{P}_{\fth_0}} g_1, \mathbb{B}_{\mathbb{P}_{\fth_0}} g_2\right]$
		&$=\mathbb{E}_{\fth_0}g_1g_2-\mathbb{E}_{\fth_0}g_1\mathbb{E}_{\fth_0}g_2$
		&\\
		$\textup{Cov}\left[\mathbb{B}_{\mathbb{P}_{\fth_0}} g_i, \mathbb{B}_{\mathbb{P}_{\fth_0}} \phi_{\fth_0,j}\right]$
		&$=\mathbb{E}_{\fth_0}g_i\phi_{\fth_0,j}-\mathbb{E}_{\fth_0}g_i\mathbb{E}_{\fth_0}\phi_{\fth_0,j}$
		&$=\mathbb{E}_{\fth_0}g_i\phi_{\fth_0,j}$\\
		$\textup{Cov}\left[\mathbb{B}_{\mathbb{P}_{\fth_0}} g_i, \mathbb{B}_{\mathbb{P}_{\fth_0}} \mathds{1}_D\right]$
		&$=\mathbb{E}_{\fth_0}g_i \mathds{1}_D-\mathbb{E}_{\fth_0}g_i\mathbb{E}_{\fth_0} \mathds{1}_D$
		&$=(1-\alpha_{\fth_0})\,\mathbb{E}_{\fth_0}g_i$\\
		$\textup{Cov}\left[\mathbb{B}_{\mathbb{P}_{\fth_0}}  \phi_{\fth_0,i}, \mathbb{B}_{\mathbb{P}_{\fth_0}} \phi_{\fth_0,j}\right]$
		&$=\mathbb{E}_{\fth_0}\phi_{\fth_0,i}\phi_{\fth_0,j}-\mathbb{E}_{\fth_0}\phi_{\fth_0,i}\mathbb{E}_{\fth_0}\phi_{\fth_0,j}$
		&$=\mathbb{E}_{\fth_0}\phi_{\fth_0,i}\phi_{\fth_0,j}$\\
		$\textup{Cov}\left[\mathbb{B}_{\mathbb{P}_{\fth_0}}  \phi_{\fth_0,i}, \mathbb{B}_{\mathbb{P}_{\fth_0}} \mathds{1}_D\right]$
		&$=\mathbb{E}_{\fth_0}\phi_{\fth_0,i}\mathds{1}_D-\mathbb{E}_{\fth_0}\phi_{\fth_0,i}\mathbb{E}_{\fth_0}\mathds{1}_D$
		&$=\mathbb{E}_{\fth_0}\phi_{\fth_0,i}\mathds{1}_D
		=0$\\
		$\textup{Cov}\left[\mathbb{B}_{\mathbb{P}_{\fth_0}}   \mathds{1}_D, \mathbb{B}_{\mathbb{P}_{\fth_0}} \mathds{1}_D\right]$
		&$=\mathbb{E}_{\fth_0}\mathds{1}_D^2-\mathbb{E}_{\fth_0}\mathds{1}_D\mathbb{E}_{\fth_0}\mathds{1}_D$
		&$=\alpha_{\fth_0}(1-\alpha_{\fth_0})$
	\end{tabular}
\end{center}
The simplifications include $\mathbb{E}_{\fth_0}\phi_{\fth_0,i}\mathds{1}_D
=\mathbb{E}_{\fth_0}\phi_{\fth_0,i}=0$ (see Assumption \ref{A4}) and $\mathbb{E}_{\fth_0}\mathds{1}_D=\alpha_{\fth_0}$. Inserting the values of the covariances leads to:\\
First and fourth summand:
\begin{align*}
	&\mathbb{E}_{\fth_0}g_1g_2-\mathbb{E}_{\fth_0}g_1\mathbb{E}_{\fth_0}g_2+K_{2,\theta_0} \cdot\mathbb{E}_{\fth_0}g_1\phi_{\fth_0,1}
	+K_{2,\vth_0} \cdot \mathbb{E}_{\fth_0}g_1\phi_{\fth_0,2}\\
	&-\frac{\mathbb{E}_{\fth_0}g_2}{\alpha_{\fth_0}}\cdot (1-\alpha_{\fth_0})\,\mathbb{E}_{\fth_0}g_1
	-\frac{\mathbb{E}_{\fth_0}g_1}{\alpha_{\fth_0}}\cdot (1-\alpha_{\fth_0})\,\mathbb{E}_{\fth_0}g_2
	+\frac{\mathbb{E}_{\fth_0}g_1\cdot \mathbb{E}_{\fth_0}g_2}{\alpha_{\fth_0}^2}\cdot \alpha_{\fth_0}(1-\alpha_{\fth_0})\\
	&=\mathbb{E}_{\fth_0}g_1g_2-\mathbb{E}_{\fth_0}g_1\mathbb{E}_{\fth_0}g_2/\alpha_{\fth_0}+K_{2,\theta_0} \cdot\mathbb{E}_{\fth_0}g_1\phi_{\fth_0,1}
	+K_{2,\vth_0} \cdot \mathbb{E}_{\fth_0}g_1\phi_{\fth_0,2}
\end{align*}
Second and third summand:
\begin{align*}
	&K_{1,\theta_0}\cdot \mathbb{E}_{\fth_0}g_2\phi_{\fth_0,1}
	+K_{1,\theta_0}K_{2,\theta_0}\cdot\mathbb{E}_{\fth_0}\phi_{\fth_0,1}\phi_{\fth_0,1}
	+K_{1,\theta_0}K_{2,\vth_0}\cdot\mathbb{E}_{\fth_0}\phi_{\fth_0,1}\phi_{\fth_0,2}\\
	&+K_{1,\vth_0}\cdot \mathbb{E}_{\fth_0}g_2\phi_{\fth_0,2}
	+K_{1,\vth_0}K_{2,\theta_0}\cdot\mathbb{E}_{\fth_0}\phi_{\fth_0,1}\phi_{\fth_0,2}
	+K_{1,\vth_0}K_{2,\vth_0}\cdot\mathbb{E}_{\fth_0}\phi_{\fth_0,2}\phi_{\fth_0,2}	
\end{align*}
Setting $K_i=(K_{i,\theta_0}, K_{i,\vth_0})^{\textup{T}}$, we finally obtain
\begin{align*}
	\textup{Cov}[\mathbb{H}_{\mathbb{P}_{\fth_0}}g_1,\mathbb{H}_{\mathbb{P}_{\fth_0}}g_2]&=
	\mathbb{E}_{\fth_0}g_1g_2-\mathbb{E}_{\fth_0}g_1\mathbb{E}_{\fth_0}g_2/\alpha_{\fth_0}+K_2^{\textup{T}} \mathbb{E}_{\fth_0}[g_1\phi_{\fth_0}]\\
	&\quad+K_1^{\textup{T}}\mathbb{E}_{\fth_0}[g_2 \phi_{\fth_0}]+K_1^{\textup{T}} \mathbb{E}_{\fth_0}[\phi_{\fth_0}\phi_{\fth_0}^{\textup{T}}]K_2.
\end{align*}

\section{Elements of covariance function for models in Sections \ref{sec61} and \ref{sec62}}\label{App4}

\subsection{For product copula ($\boldsymbol{\Pi}$)} \label{App41}

With $g_{x_i,t_i,D}=\mathds{1}_{[0,x_i]\times[0,t_i]\cap D}$, it is $\mathbb{E}_{\theta_0}(g_{x_i,t_i,D})=\mathbb{P}_{\theta_0}((X_1,T_1)^{\textup{T}} \in [0,x_i]\times[0,t_i]\cap D)$. We construct the intersection of the rectangle and the parallelogramm as a rectangle minus one or two triangles, depending on the position of $(x_i,t_i)$ relative to $D$. Of course, algebraic expressions are faster for the computation than numerical integration. With the following auxiliary functions, we calculate the expectation for every case, where $func_1$ is the integral of the density $f_{\theta_0}$ over the rectangle, and $func_2$ and $func_3$ are the integrals over the corresponding lower and upper rectangle, respectively. 
\begin{align*}
	func_1(x,t)&=\frac{t}{G}(1-e^{-\theta_0x})\\
	func_2(x)&=\frac {e^{-\theta_0 x}}{G \theta_0}(s\theta_0-\theta_0x -1)  +\frac {e^{-\theta_0 s}}{G \theta_0}\\
	func_3(t)&=\frac {e^{-\theta_0 t}}{G \theta_0}-\frac{1-t \theta_0}{G \theta_0}
\end{align*}	
\begin{align*}
	\mathbb{E}_{\theta_0}(g_{x_i,t_i,D})=
	\left\{\begin{array}{ll}
		func_1(x_i,t_i)-func_2(x_i)
		-func_3(t_i), &  (x_i,t_i) \in D, \, x_i>s\\
		func_1(x_i,t_i)-func_3(t_i), & (x_i,t_i) \in D, \, x_i \leq s\\
		func_1(t_i+s,t_i)-func_2(t_i+s)-func_3(t_i), & t_i < x_i-s\\
		func1(x_i,x_i)-func_2(x_i)-func_3(x_i), & x_i > s, \, x_i < t_i\\
		func_1(x_i,x_i)-func_3(x_i), &x_i \leq s, \, x_i < t_i
	\end{array}\right.
\end{align*}
The Fr\'{e}chet derivative has the representation according to Corollary \ref{cor1}, with $x$ and $t$ replaced by $x_i$ and $t_i$.
Therefore, we only have to derive our density with respect to $\theta$, while the geometric considerations concerning the integration regions stay the same.
\begin{align*}
	dfunc_1(x,t)&=\frac{xt}{G}e^{-\theta_0x}\\
	dfunc_2(x)&=\frac {e^{-\theta_0 x}}{G}\left(x^2-sx+\frac{x}{\theta_0} +{1}{\theta_0^2}\right)  -\frac {e^{-\theta_0 s}}{G \theta_0}\left(s+\frac{1}{\theta_0}\right)\\
	dfunc_3(t)&=\frac{1}{G \theta_0^2}\left(1-e^{\theta_0 t} (1+\theta_0 t)\right)
\end{align*}	
\begin{align*}
	\dot{\mathbb{E}}_{\theta_0}(g_{x_i,t_i,D})=
	\left\{\begin{array}{ll}
		dfunc_1(x_i,t_i)-dfunc_2(x_i)
		-dfunc_3(t_i), &  (x_i,t_i) \in D, \, x_i>s\\
		dfunc_1(x_i,t_i)-dfunc_3(t_i), & (x_i,t_i) \in D, \, x_i \leq s\\
		dfunc_1(t_i+s,t_i)-dfunc_2(t_i+s)-dfunc_3(t_i), & t_i < x_i-s\\
		dfunc_1(x_i,x_i)-dfunc_2(x_i)-dfunc_3(x_i), & x_i > s, \, x_i < t_i\\
		dfunc_1(x_i,x_i)-dfunc_3(x_i), &x_i \leq s, \, x_i < t_i
	\end{array}\right.
\end{align*}
According to the simple form of $\psi_{\theta_0}$ in Equation \ref{e15} under independent truncation, we have
$
	\mathbb{E}_{\theta_0}(\dot{\psi}_{\theta_0})=\alpha_{\theta_0} \left(\frac{2}{\theta_0^2} -\frac{s^2 e^{-\theta_0s}}{(1-e^{-\theta_0 s})^2} - \frac{G^2 e^{-G \theta_0}}{(1-e^{G \theta_0})^2}\right)$.

\subsection{For $\textup{FGM}$ copula} \label{App42}

Adding the Farlie-Gumbel-Morgernstern dependence, the auxiliary functions depend on two parameters, $\vth$ and $\theta$.
\begin{align*}
	func_1(x,t)&=\frac{t}{G}\left(1-e^{-\theta_0x}\right)\cdot \left[1 + \vth_0 e^{-\theta_0 x} \left(1-\frac{t}{G}\right)\right]\\
	func_2(x)&=\frac{s-x}{G}e^{-\theta_0 x}+\frac{1}{\theta_0 G} \left(e^{-\theta_0 s} - e^{-\theta_0 x}\right)\\
	&\quad -\frac{\vth_0}{G}(x-s) e^{-\theta_0 x}\cdot \left(1-\frac{x-s}{G}\right) \cdot \left(e^{- \theta_0 x}-1\right)\\
	&\quad -\frac{\vth_0}{G}e^{-\theta_0 x}\left(1-\frac{2 (x-s)}{G}\right)  \cdot \left(\frac{1}{2} e^{- \theta_0 x} - 1\right)\\
	&\quad +\frac{\vth_0}{G} e^{-\theta_0 s} \left(\frac{1}{2} e^{- \theta_0 s} - 1\right)
	+\frac{2\vth_0}{G^2 \theta_0^2} e^{-\theta_0 x}\left(\frac{1}{4} e^{- \theta_0 x} - 1\right)\\
	&\quad -\frac{2\vth_0}{G^2 \theta_0^2}e^{-\theta_0 s} \left(\frac{1}{4} e^{- \theta_0 s} - 1\right)\\
	func_3(t)&=\frac{t}{G} + \frac{1}{\theta_0 G} \left(e^{-\theta_0 t}-1\right) +\frac{\vth_0}{\theta_0 G} e^{-\theta_0 t}\cdot  \left(1-\frac{2t}{G} \right) \cdot \left(\frac{1}{2} e^{-\theta_0 t} -1 \right)\\
	&\quad+ \frac{1}{2} \frac{\vth_0}{\theta_0 G}-\frac{2 \vth_0}{G^2 \theta_0^2} e^{-\theta_0 t} \left(\frac{1}{4} e^{-\theta_0 t} -1 \right) - \frac{3 \vth_0}{2G^2 \theta_0^2}
\end{align*}
\begin{align*}
	\mathbb{E}_{\fth_0}(g_{x_i,t_i,D})=
	\left\{\begin{array}{ll}
		func_1(x_i,t_i)-func_2(x_i)
		-func_3(t_i), &  (x_i,t_i) \in D, \, x_i>s\\
		func_1(x_i,t_i)-func_3(t_i), & (x_i,t_i) \in D, \, x_i \leq s\\
		func_1(t_i+s,t_i)-func_2(t_i+s)-func_3(t_i), & t_i < x_i-s\\
		func1(x_i,x_i)-func_2(x_i)-func_3(x_i), & x_i > s, \, x_i < t_i\\
		func_1(x_i,x_i)-func_3(x_i), &x_i \leq s, \, x_i < t_i
	\end{array}\right.
\end{align*}
Therefore, we have to include the derivations for both parameters, that is $\dot{\mathbb{E}}_{\vth_0}(g_{x_i,t_i,D})$ and $\dot{\mathbb{E}}_{\theta_0}(g_{x_i,t_i,D})$. The correspondig derivations for the auxiliary functions have been conducted with $R$ and are omitted here. Due to the information matrix equality, we have $\mathbb{E}_{\fth}(\dot{\psi}_{\fth_0})^{-1}=-\mathbb{E}_{\fth_0}(\psi_{\fth_0}\psi_{\fth_0}^{\textup{T}})$ with $\psi_{\theta_0}$ from Equation \ref{e19}. The integrations therein have to be done numerically.

\end{document}